\begin{document}


\newdef{theorem}{Theorem}
\newdef{lemma}{Lemma}
\newdef{prop}{Proposition}
\newdef{cor}{Corollary}
\newdef{definition}{Definition}
\newdef{claim}{Claim}
\newdef{problem}{Problem}
\newdef{remark}{Remark}
\newdef{fact}{Fact}
\newdef{example}{Example}

\newenvironment{proofof}[1]{\noindent{\bf Proof of #1:}}{\qed\\}


\newcommand{\comment}[1]{\begin{quote}\sf [*** #1 ***]\end{quote}}
\newcommand{\tinyspace}{\mspace{1mu}}
\newcommand{\microspace}{\mspace{0.5mu}}
\newcommand{\op}[1]{\operatorname{#1}}
\newcommand{\snorm}[1]{\lVert\tinyspace#1\tinyspace\rVert}
\newcommand{\ceil}[1]{\left\lceil #1 \right\rceil}
\newcommand{\floor}[1]{\left\lfloor #1 \right\rfloor}
\def\iso{\cong}
\newcommand{\defeq}{\stackrel{\mathrm{def}}{=}}
\newcommand{\tr}{\operatorname{Tr}}
\newcommand{\rank}{\operatorname{rank}}
\renewcommand{\det}{\operatorname{Det}}
\renewcommand{\vec}{\operatorname{vec}}
\newcommand{\im}{\operatorname{Im}}
\renewcommand{\t}{{\scriptscriptstyle\mathsf{T}}}
\newcommand{\ip}[2]{\left\langle #1 , #2\right\rangle}
\newcommand{\sip}[2]{\langle #1 , #2\rangle}
\def\({\left(}
\def\){\right)}
\def\I{\mathbb{1}}

\newcommand{\triplenorm}[1]{
  \left|\!\microspace\left|\!\microspace\left| #1
  \right|\!\microspace\right|\!\microspace\right|}

\newcommand{\fid}{\operatorname{F}}
\newcommand{\setft}[1]{R^{#1}}
\newcommand{\lin}[1]{\setft{L}\left(#1\right)}
\newcommand{\density}[1]{\setft{D}\left(#1\right)}
\newcommand{\unitary}[1]{\setft{U}\left(#1\right)}
\newcommand{\trans}[1]{\setft{T}\left(#1\right)}
\newcommand{\herm}[1]{\setft{Herm}\left(#1\right)}
\newcommand{\pos}[1]{\setft{Pos}\left(#1\right)}
\newcommand{\pd}[1]{\setft{Pd}\left(#1\right)}
\newcommand{\sphere}[1]{\mathcal{S}\!\left(#1\right)}
\newcommand{\opset}[3]{\setft{#1}_{#2}\!\left(#3\right)}

\def\complex{\mathbb{C}}
\def\real{\mathbb{R}}
\def\natural{\mathbb{N}}
\def\integer{\mathbb{Z}}

\def \lket {\left|}
\def \rket {\right\rangle}
\def \lbra {\left\langle}
\def \rbra {\right|}
\newcommand{\ket}[1]{\lket\microspace #1 \microspace\rket}
\newcommand{\bra}[1]{\lbra\microspace #1 \microspace\rbra}

\newenvironment{mylist}[1]{\begin{list}{}{
    \setlength{\leftmargin}{#1}
    \setlength{\rightmargin}{0mm}
    \setlength{\labelsep}{2mm}
    \setlength{\labelwidth}{8mm}
    \setlength{\itemsep}{0mm}}}
    {\end{list}}

\newcommand{\class}[1]{\textup{#1}}
\newcommand{\reg}[1]{\mathsf{#1}}

\newenvironment{namedtheorem}[1]
    {\begin{trivlist}\item {\bf #1.}\em}{\end{trivlist}}

\def\X{\mathcal{X}}
\def\Y{\mathcal{Y}}
\def\Z{\mathcal{Z}}
\def\W{\mathcal{W}}
\def\A{\mathcal{A}}
\def\B{\mathcal{B}}
\def\V{\mathcal{V}}
\def\U{\mathcal{U}}
\def\C{\mathcal{C}}
\def\D{\mathcal{D}}
\def\E{\mathcal{E}}
\def\F{\mathcal{F}}
\def\M{\mathcal{M}}
\def\R{\mathcal{R}}
\def\P{\mathcal{P}}
\def\G{\mathcal{G}}
\def\Q{\mathcal{Q}}
\def\S{\mathcal{S}}
\def\T{\mathcal{T}}
\def\K{\mathcal{K}}
\def\O{\mathcal{O}}
\def\J{\mathcal{J}}
\def\yes{\text{yes}}
\def\no{\text{no}}
\def\ve{\varepsilon}
\def\poly{\mathrm{poly}}
\def\polylog{\mathrm{polylog}}
\def\opt{\mathrm{opt}}
\def\thr{\mathsf{thr}}
\def\Diag{\mathrm{Diag}}


\newcommand{\rec} [3] {\mathrm{rec}^{#1}_{#2}\left( #3 \right)}
\newcommand{\lrec} [3] {\widetilde{\mathrm{rec}}^{#1}_{#2}\left( #3 \right)}
\newcommand{\srec} [3] {\mathrm{srec}^{#1}_{#2}\left( #3 \right)}
\newcommand{\lsrec} [3] {\widetilde{\mathrm{srec}}^{#1}_{#2}\left( #3 \right)}
\newcommand{\fsrec} [3] {\widehat{\mathrm{srec}}^{#1}_{#2}\left( #3 \right)}
\newcommand{\err} [2] {\mathrm{err}_{#1}\left( #2 \right)}
\newcommand{\ess} [3] {\mathrm{ess}^{#1}_{#2}\left( #3 \right)}
\newcommand{\rpt} [2] {\mathrm{rpt}_{#1}\left( #2 \right)}

\newcommand{\suppress}[1]{}
\newcommand{\etal}{\emph{et al.\/}}

\newcommand {\br} [1] {\ensuremath{ \left( #1 \right) }}
\newcommand {\Br} [1] {\ensuremath{ \left[ #1 \right] }}
\newcommand {\set} [1] {\ensuremath{ \left\lbrace #1 \right\rbrace }}
\newcommand {\minusspace} {\: \! \!}
\newcommand {\smallspace} {\: \!}
\newcommand {\fn} [2] {\ensuremath{ #1 \minusspace \br{ #2 } }}
\newcommand {\Fn} [2] {\ensuremath{ #1 \minusspace \Br{ #2 } }}
\newcommand {\eqdef} {\ensuremath{ \stackrel {\mathrm{def}} {=} }}
\newcommand {\fndec} [3] {\ensuremath{ #1 : \, #2 \rightarrow #3 }}
\newcommand {\mutinf} [2] {\fn{\mathrm{I}}{#1 \smallspace : \smallspace #2}}
\newcommand {\condmutinf} [3] {\mutinf{#1}{#2 \smallspace \middle\vert \smallspace #3}}
\newcommand {\prob} [1] {\Fn{\Pr}{#1}}
\newcommand {\pr} [2] {\Fn{\Pr_{#1}}{#2}}
\newcommand {\setfont} [1] {\ensuremath{\mathcal{#1}}}
\newcommand {\abs} [1] {\ensuremath{ \left| #1 \right| }}
\newcommand {\norm} [1] {\ensuremath{ \left\| #1 \right\| }}
\newcommand {\normsub} [2] {\ensuremath{ \norm{#1}_{#2} }}
\newcommand {\onenorm} [1] {\normsub{#1}{1}}
\newcommand {\SetToN} [1] {\set{ 1, 2, \ldots, #1 }}
\newcommand {\relent} [2] {\fn{\mathrm{S}}{#1 \middle\| #2}}
\newcommand {\rminent} [2] {\fn{\mathrm{S}_{\infty}}{#1 \middle\| #2}}
\newcommand {\expec} [2] {\Fn{\mathbb{E}_{\substack{#1}}}{#2}}
\newcommand {\modi}{\hl{MODIFIED}}
\newcommand {\mend}{\hl{END}}


%

\title{A strong direct product theorem in terms of  the smooth rectangle bound}
%
%
%
%
%

\numberofauthors{2} 
%
\author{
%
%
\alignauthor
Rahul Jain\\
       \affaddr{Centre for Quantum Technologies and Department of Computer Science}\\
       \affaddr{National University of Singapore}\\
       \email{rahul@comp.nus.edu.sg}
\alignauthor
Penghui Yao\\
       \affaddr{Centre for Quantum Technologies}\\
       \affaddr{National University of Singapore}\\
       \email{pyao@nus.edu.sg}
}

\maketitle
\begin{abstract}
A strong direct product theorem states that, in order to solve $k$ instances of a problem, if we provide
less than $k$ times the resource required to compute one instance, then the probability of overall success
is exponentially small in $k$. In this paper, we consider the model of two-way public-coin communication complexity and show a strong direct
product theorem for all relations in terms of the {\em smooth
rectangle bound}, introduced  by Jain and Klauck~\cite{JainKlauck2010} as a generic  lower bound method in this model. Our result therefore implies a strong direct product theorem for all relations for which an (asymptotically) optimal lower bound can be provided using the smooth rectangle bound. In fact we are not aware of any relation for which it is known that the smooth rectangle bound does not provide an optimal
lower bound. This lower bound subsumes many of the other known lower bound methods, for example the {\em rectangle bound} (a.k.a the {\em corruption bound})~\cite{Razborov92}, the {\em smooth discrepancy bound} (a.k.a  the {\em $\gamma_2$ bound}~\cite{Linial:2007} which in turn subsumes the {\em discrepancy bound}), the {\em subdistribution bound}~\cite{Jain2008} and the {\em conditional min-entropy bound}~\cite{Jain:2011}.

As a consequence, our result reproves some of the known strong direct product results, for example for \textsf{Inner Product}~\cite{Kushilevitz96}
and \textsf{Set-Disjointness}~\cite{Klauck2010,Jain:2011}. Our result also shows new strong direct product result for \textsf{Gap-Hamming Distance}~\cite{Chakrabarti:2011:OLB:1993636.1993644,Sherstov11a}  and also implies near optimal direct product results for several important functions and relations used to show exponential separations between classical and quantum communication complexity, for which
near optimal lower bounds are provided using the rectangle bound, for example by Raz~\cite{Raz:1999}, Gavinsky~\cite{Gavinsky:2008} and Klartag and
Regev~\cite{Regev:2011}. 

We show our result using information theoretic arguments. A key tool we use is a sampling protocol due to Braverman~\cite{Braverman2012},
in fact a modification of it used by Kerenidis, Laplante, Lerays, Roland and  Xiao~\cite{Kerenidis2012}.
\end{abstract}




\newpage

\section{Introduction}
Given a model of computation, suppose solving one instance of a given problem $f$ with probability of success
$p<1$ requires $c$ units of some resource. A natural question that may be asked is: how much
resource is needed to solve $f^k$, $k$ instances of the same problem, simultaneously. A naive way is by running the optimal protocol for $f$, $k$ times in parallel, which requires $c\cdot k$ units of resource, however the probability of overall success is $p^k$ (exponentially small in $k$). A {\it strong direct product conjecture} for $f$ states that this is essentially optimal, that is if only $o(k\cdot c)$ units of resource are provided  for any protocol solving $f^k$, then the probability of overall success is at most $p^{\Omega(k)}$.

Proving or disproving strong direct product conjectures in various models of computation has been a central task in theoretical computer science, notable examples of such results being Yao's {\em XOR lemma}~\cite{Yao82} and  Raz's
\cite{Raz:1995:PRT:225058.225181} theorem for two-prover games. Readers may refer to~\cite{Klauck2010, Jain:2011, Jain2012} for a good discussion of known results in different models of computation. In the present work, we consider the model of two-party two-way public-coin communication complexity~\cite{Yao:1979} and consider the direct product question in this model. In this model, there are two parties who wish to compute a joint function (more generally a relation) of their input, by doing local computation, sharing public coins and exchanging messages. The resource counted is the number of bits communicated between them.  The textbook by Kushilevitz and Nisan~\cite{Kushilevitz96} is an excellent reference for communication complexity.  Much effort has been made towards investigating direct product questions in this model and strong direct product theorems have been shown for many different functions, for example \textsf{Set-Disjointness}~\cite{Klauck2010,Jain:2011},  \textsf{Inner Product}~\cite{Lee2008}, \textsf{Pointer Chasing}~\cite{Jain2012} etc. To the best of our knowledge, it is not known if the strong direct product conjecture fails to hold for any function or relation in this model. Therefore, whether the strong direct product conjecture holds for all relations in this model, remains one of the major open problems in communication complexity. In the model of constant-round public-coin communication complexity, recently a strong direct product result has been shown to hold for all relations by Jain, Perezl\'{e}nyi and Yao~\cite{Jain2012}. The work~\cite{Jain2012} built on a previous result due to Jain~\cite{Jain:2011} showing  a strong direct product result for all relations in the model of one-way public-coin communication complexity (where a single message is sent from Alice to Bob, who then determines the answer).

The weaker {\it direct sum} conjecture, which states that solving $k$ independent instances of a problem with constant success probability requires $k$ times the resource needed to compute one instance
with constant success probability, has also been extensively investigated in different models of communication complexity and has met a better success. Direct sum theorems have been shown to hold for all relations in the public-coin one-way model~\cite{Jain:2003:DST:1759210.1759242}, the entanglement-assisted quantum one-way model~\cite{Jain2005}, the public-coin simultaneous message passing model~\cite{Jain:2003:DST:1759210.1759242}, the private-coin simultaneous message passing model~\cite{Jain:2009:NRS:1602931.1603157}, the constant-round public-coin two-way model~\cite{BravermanRao2011} and the model of two-way distributional communication complexity under product distributions~\cite{Barak:2010:CIC:1806689.1806701}. Again please refer to \cite{Klauck2010, Jain:2011, Jain2012} for a good discussion.

Another major focus in communication complexity has been to investigate generic lower bound methods, that apply to all functions (and possibly to all relations). In the model  we are concerned with, various generic lower bound methods are known, for example the {\em partition bound}~\cite{JainKlauck2010},  the {\em information complexity}~\cite{Chakrabarti01},  the {\em smooth rectangle bound}~\cite{JainKlauck2010} (which in turn subsumes the {\em rectangle bound} a.k.a the {\em corruption bound})~\cite{Yao:1983:LBP:1382437.1382849,Babai:1986:CCC:1382439.1382962,Razborov92,Klauck2003,Beame:2006:SDP:1269083.1269088}, the {\em smooth discrepancy bound} a.k.a  the {\em $\gamma_2$ bound}~\cite{Linial:2007} (which in turn subsumes the {\em discrepancy bound}), the {\em subdistribution bound}~\cite{Jain2008} and the {\em conditional min-entropy bound}~\cite{Jain:2011}. Proving strong direct product results in terms of these lower bound methods is a reasonable approach to attacking the general question.
Indeed, many lower bounds have been shown to satisfy strong direct product theorems, example the discrepancy bound~\cite{Lee2008}, the subdistribution bound under product distributions~\cite{Jain2008}, the smooth discrepancy bound~\cite{Sherstov:2011:SDP:1993636.1993643} and the conditional min-entropy bound~\cite{Jain:2011}.

\subsection*{Our result} In present work, we show a strong direct product theorem in terms of the smooth rectangle bound, introduced by Jain and Klauck~\cite{JainKlauck2010}, which generalizes the  rectangle bound (a.k.a. the corruption bound)
\cite{Yao:1983:LBP:1382437.1382849,Babai:1986:CCC:1382439.1382962,Razborov92,Klauck2003,Beame:2006:SDP:1269083.1269088}. Roughly speaking, the rectangle bound for relation $f\subseteq \X \times \Y \times \Z$ under a distribution $\mu$, with respect to an element $z\in \Z$, and error $\ve$, tries to capture the size (under $\mu$) of a largest rectangle for which $z$ is a right answer for $1 - \ve$ fraction of inputs inside the rectangle. It is not hard to argue that the rectangle bound forms a lower bound on the {\em distributional } communication complexity of $f$ under $\mu$. The smooth rectangle bound for $f$ further  captures the maximum, over all relations $g$ that are close to $f$  under $\mu$, of the rectangle bound  of $g$ under $\mu$. The distributional error setting can eventually be related to the worst case error setting via the well known Yao's principle~\cite{Yao:1983:LBP:1382437.1382849}.

Jain and Klauck showed that the smooth rectangle bound is stronger than every lower bound method we mentioned above except the partition bound and the information complexity. Jain and Klauck showed that the partition bound subsumes the smooth rectangle bound and in a recent work Kerenidis, Laplante, Lerays, Roland and  Xiao~\cite{Kerenidis2012} showed that the information complexity subsumes the smooth rectangle bound (building on the work of Braverman and Weinstein~\cite{BravermanWeinstein2011} who showed that the information complexity subsumes the discrepancy bound). New lower bounds for specific functions have been discovered using the smooth rectangle bound, for example Chakrabarti and Regev's~\cite{Chakrabarti:2011:OLB:1993636.1993644} optimal lower bound for the \textsf{Gap-Hamming Distance} partial function. Klauck~\cite{Klauck2010} used the smooth rectangle bound to show a strong direct product result for the \textsf{Set-Disjointness} function, via exhibiting a lower bound on a related function. On the other hand, as far as we know, no function (or relation) is known for which its  smooth rectangle bound is (asymptotically) strictly smaller than its two-way public-coin communication complexity. Hence establishing whether or not the smooth rectangle bound is a tight lower bound for all functions and relations in this model is an important open question.
Our result is as follows.

\begin{theorem}\footnote{For a relation $f$ with $\mathrm{R}^{\mathrm{pub}}_{\ve}(f) = \O(1)$, a strong direct product result can be shown via direct arguments~\cite{Jain:2011}.}
\label{thm:sdptsrec}
Let $\X,\Y,\Z$ be finite sets, $f\subseteq\X\times\Y\times\Z$ be a relation, and $t>1$ be an integer. Let $\mu$ be a distribution on $\X\times\Y$. Let $z \in \Z$ and $\beta\defeq\pr{(x,y)\leftarrow \mu}{f(x,y)=\set{z}}$.  Let $0<\ve<1/3$ and $\ve',\delta>0$ be such that $\frac{\delta+22\ve}{\beta-33\ve} < (1+\ve')\frac{\delta}{\beta}$. It holds that,
\[\mathrm{R}^{\mathrm{pub}}_{1-(1-\ve)^{\lfloor\ve^2 t/32\rfloor}}(f^t)\geq\frac{\ve^2}{32} \cdot t \cdot \br{ 11 \ve \cdot \lsrec{z,\mu}{(1+\ve')\delta/\beta,\delta}{f}  - 2}.\]
\end{theorem}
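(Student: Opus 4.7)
The plan is to argue the contrapositive: from any deterministic protocol $\Pi$ for $f^t$ of communication $c$ bits that succeeds with probability at least $(1-\ve)^{k}$ under $\mu^t$, where $k \defeq \lfloor \ve^2 t/32 \rfloor$, construct a deterministic single-instance protocol for $f$ whose correctness/cost tradeoff witnesses an upper bound of roughly $\frac{32 c}{11\ve^3 t} + \frac{2}{11\ve}$ on $\lsrec{z,\mu}{(1+\ve')\delta/\beta,\delta}{f}$; rearranging yields the theorem. Fixing the public coin reduces immediately to the deterministic case.

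\emph{Locating a good coordinate.} Let $(X^t, Y^t) \sim \mu^t$, let $T$ be the transcript, and let $W_j$ denote the event that $\Pi$ outputs $z$ on coordinate $j$. A standard chain-rule / iterated-conditioning argument applied to the global success probability produces indices $i_1, \ldots, i_k$ and nested events $E_j \defeq W_{i_1} \cap \cdots \cap W_{i_j}$ such that $\Pr[W_{i_{j+1}} \mid E_j] \geq 1-\ve$ for every $j$. Simultaneously, the transcript length $c$ controls $\sum_{j=0}^{k-1} \bigl(I(X_{i_{j+1}} : T \mid Y^t, E_j) + I(Y_{i_{j+1}} : T \mid X^t, E_j)\bigr)$ by the usual direct-sum style bound, so averaging over $j$ selects an index $i = i_{j^\ast+1}$ and an event $E = E_{j^\ast}$ under which both the conditional success probability is at least $1-\ve$ and each of the two conditional mutual informations is $O(c/k) = O(c/(\ve^2 t))$.

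\emph{Single-instance simulation via sampling.} On input $(x,y) \sim \mu$, Alice and Bob embed $(x,y)$ into coordinate $i$ and invoke the Braverman-style joint-sampling protocol, in the form used by Kerenidis, Laplante, Lerays, Roland and Xiao~\cite{Kerenidis2012}, to jointly sample the remaining inputs $(X_{-i}, Y_{-i})$ together with the transcript $T$ from the $\Pi$-induced distribution conditioned on $E$ and on $(X_i, Y_i) = (x,y)$. By the guarantee of that sampling primitive, the communication is $O(c/(\ve^2 t))$ plus an additive $O(1/\ve)$ overhead, and the total-variation error is absorbed into the $\ve$-slack. The parties then output whatever $\Pi$ would have output on coordinate $i$. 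The choice of $i$ guarantees that, on inputs drawn from $\mu$ conditioned on $E$, this single-instance protocol outputs $z$ (correctly) with probability at least $1 - O(\ve)$.

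\emph{From the single-instance protocol to the smooth rectangle bound.} Each leaf of the simulated deterministic protocol labels a combinatorial rectangle of $\X \times \Y$; the union of the rectangles labelled $z$ covers the $\mu$-mass of pairs with $f(x,y) = \{z\}$ up to the overall error. To translate the correctness guarantee (which is against $\mu$ conditioned on $E$) into the corruption condition defining $\widetilde{\mathrm{srec}}$ (which is against $\mu$ itself), one performs a change of measure from $\mu \mid E$ back to $\mu$; this is where the hypothesis $\frac{\delta+22\ve}{\beta-33\ve} < (1+\ve')\delta/\beta$ and the numerical constants $22$ and $33$ naturally arise, as they bound the inflation of the corrupt-to-correct ratio inside each rectangle when the conditioning is removed. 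Counting the at most $2^{O(c/(\ve^2 t))}$ accepting leaves then gives the required upper bound on $\lsrec{z,\mu}{(1+\ve')\delta/\beta,\delta}{f}$. The main obstacle I expect is precisely this error bookkeeping: three distinct error budgets (the per-step protocol error $\ve$, the sampling total-variation error, and the slack from conditioning on $E$) must all be propagated into the two rectangle parameters $(1+\ve')\delta/\beta$ and $\delta$ without losing the tight linear-in-$\ve$ dependence, and it is the careful tracking of these losses that pins down the precise constants appearing in the theorem.
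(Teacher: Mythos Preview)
Your high-level outline tracks the paper's strategy, but two steps are genuinely wrong as written.

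First, you mischaracterise the Kerenidis--Laplante--Lerays--Roland--Xiao sampling primitive. It is a \emph{zero-communication} protocol in which each party, using only public coins, either aborts or outputs an element of $\Z$; it does not give a single-instance protocol with communication $O(c/(\ve^2 t))$. The paper uses it exactly this way (protocol $\Pi'$, analysed in Lemma~\ref{lem:mainlemma}): the non-abort probability is at least roughly $2^{-c/\ve^2}$, and conditioned on non-abort the output is correct with high probability. Fixing the public coins of $\Pi'$ to a good value then yields a \emph{single rectangle} directly (Lemma~\ref{lem:zeroprotocolimpliesrec}); there are no ``accepting leaves'' to count because there is no communication tree. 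If you instead intend a genuine information-to-communication compression (BBCR or Braverman--Rao), those carry square-root or polylogarithmic overhead and would not recover the stated bound.

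Second, you underestimate the difficulty caused by conditioning on $E$. After conditioning, the marginal of $(X_j,Y_j)$ is only $\ell_1$-close to $\mu$; pointwise the ratio $\Pr[X_j^1Y_j^1=xy]/\mu(x,y)$ can be arbitrary, and for such $(x,y)$ the sampler's output can be far from the target conditional distribution. The paper isolates this as its key new technical ingredient (Claim~\ref{claim:distclose}): bad inputs are automatically down-weighted in the non-abort distribution, which is precisely what lets one conclude (Claim~\ref{claim:singlemessagecloseness}) that the sampled inputs-plus-transcript, conditioned on non-abort, are $\ell_1$-close to $X_j^1Y_j^1R_j^1M^1$. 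Your sentence ``the total-variation error is absorbed into the $\ve$-slack'' skips exactly this step. Two smaller but related issues: the paper does not condition on $Y^t$ as you do, but on a carefully constructed auxiliary $R_j$ so that the factorisation of Claim~\ref{claim:distribution} holds, without which the sampler is not even applicable; and your event $W_j$ should be ``$\Pi$ is correct on coordinate $j$'' (the paper's $T_j$), not ``$\Pi$ outputs $z$ on coordinate $j$'', since the latter has probability near $\beta$, not $1-\ve$, and your iterated conditioning would not go through.
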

Above $\mathrm{R}^{\mathrm{pub}}(\cdot)$ represents the two-way public-coin communication communication complexity and $\lsrec{}{}{\cdot}$ represents the smooth rectangle bound (please refer to Section~\ref{sec:preliminary} for precise definitions). 
Our result  implies a strong direct product theorem for all relations for which an (asymptotically) optimal lower bound can be provided using the smooth rectangle bound.  As a consequence, our result reproves some of the known strong direct product results, for example for \textsf{Inner Product}~\cite{Kushilevitz96} 
and \textsf{Set-Disjointness}~\cite{Klauck2010,Jain:2011}. Our result also shows new strong direct product result for \textsf{Gap-Hamming Distance}~\cite{Chakrabarti:2011:OLB:1993636.1993644,Sherstov11a}  and also implies near optimal direct product results for several important functions and relations used to show exponential separations between classical and quantum communication complexity, for which
near optimal lower bounds are provided using the rectangle bound, for example by Raz~\cite{Raz:1999}, Gavinsky~\cite{Gavinsky:2008} and Klartag and
Regev~\cite{Regev:2011}. 

In a recent work, Harsha and Jain~\cite{HarshaJ2013} have shown that the smooth-rectangle bound provides an optimal lower bound of $\Omega(n)$ for the $\mathsf{Tribes}$ function. For this function the rectangle bound fails to provide an optimal lower bound since it is $O(\sqrt{n})$. Earlier Jayram, Kumar and Sivakumar~\cite{JayramKS2003} had shown a lower bound of $\Omega(n)$ using {\em information~complexity}. The result of~\cite{HarshaJ2013} along with Theorem~\ref{thm:sdptsrec} (and Lemma~\ref{lem:eqv} appearing in the Appendix, which relates two different definitions of the smooth-rectangle bound) implies a strong direct product result for the  $\mathsf{Tribes}$ function.
 
In \cite{Kerenidis2012}, Kerenidis et. al. introduced the {\it relaxed partition bound} (a weaker version of the partition bound~\cite{JainKlauck2010}) and showed it to be stronger than the smooth rectangle bound. It is easily seen (by comparing the corresponding linear-programs) that the smooth rectangle bound and the relaxed partition bound are in-fact equivalent for boolean functions (and more generally when the size of output set is a constant). Thus our result also implies
a strong direct product theorem in terms of the relaxed partition bound for boolean functions (and more generally when the size of output set is a constant).


\subsection*{Our techniques}

The broad argument of the proof of our result is as follows. We show our result in the  distributional error setting and translate it to the worst case error setting using the well known Yao's principle \cite{Yao:1983:LBP:1382437.1382849}. Let $f$ be a relation,  $\mu$ be a distribution on $\X\times\Y$, and $c$ be the smooth rectangle bound of $f$ under the distribution $\mu$ with output $z \in \Z$. Consider a protocol $\Pi$ which computes $f^k$ with inputs drawn from distribution $\mu^k$ and communication $o(c \cdot k)$ bits. Let $\C$ be a subset of the coordinates $\{1, 2, \ldots, k\}$. If the probability that $\Pi$ computes all the instances in $\C$ correctly is as small as desired, then we are done. Otherwise, we exhibit a new coordinate $j\notin \C$, such  that the probability, conditioned on success in $\C$,  of the protocol $\Pi$ answering correctly in the $j$-th coordinate is bounded away from $1$. Since $\mu$ could be a non-product distribution we introduce a new random variable $R_j$, such that conditioned on it and $X_jY_j$ (input in the $j$th coordinate), Alice and Bob's inputs in the other coordinates become independent. Use of such a variable to handle non product distributions has been used in many previous works, for example~$\cite{Bar-Yossef2002a,Holenstein2007,Barak:2010:CIC:1806689.1806701,Jain:2011,Jain2012}$.

Let  the random variables $X^1_jY^1_jR_j^1M^1$  represent the inputs  in the $j$th coordinate, the new variable $R_j$ and the
message transcript of $\Pi$, conditioned on the success on $\C$. The first useful property that we observe is that the joint distribution of
$X^1_jY^1_jR_j^1M^1$ can be written as,
$$\pr{}{X^1_jY^1_jR^1_jM^1=xym}=\frac{1}{q}\mu(x,y)u_x(r_j,m)u_y(r_j,m), $$
where $u_x,u_y$ are functions and $q$ is a positive real number.
The marginal distribution of $X_j^1Y_j^1$ is no longer $\mu$ though.  However (using arguments as in~\cite{Jain:2011,Jain2012}),
one can show that the distribution of $X_j^1Y_j^1$ is close, in $\ell_1$ distance, to $\mu$ and
$\condmutinf{X^1_j}{R^1_jM^1}{Y^1_j}+\condmutinf{Y^1_j}{R^1_jM^1}{X^1_j}\leq o(c)$, where $\mutinf{}{}$ represents the mutual information (please refer to Section~\ref{sec:preliminary} for precise definitions) .

Now, assume for contradiction that the success in the $j$th coordinate in $\Pi$ is large, like $0.99$, conditioned on success in $\C$.  Using the conditions obtained in the previous paragraph, we argue that there exists a zero-communication public-coin protocol $\Pi'$,  between Alice and Bob, with inputs drawn from $\mu$. In $\Pi'$ Alice and Bob are allowed to abort the protocol or output an element in $\Z$. We show that the probability of non-abort for this protocol is large, like $2^{-c}$, and conditioned on non-abort, the probability that Alice and Bob output a correct answer for their inputs is also large, like $0.99$. This allows us to exhibit (by fixing the public coins of $\Pi'$ appropriately),   a large rectangle (with weight under $\mu$ like $2^{-c}$) such that $z$ is a correct answer for a large fraction (like $0.99$) of the inputs inside the rectangle. This shows that the rectangle bound of $f$, under $\mu$ with output $z$, is smaller than $c$. With careful analysis we are also able to show that the smooth rectangle bound of $f$ under $\mu$, with output $z$, is smaller than $c$, reaching a contradiction to the definition of $c$.

The sampling protocol that we use to obtain the public-coin zero communication protocol, is the same as that in Kerenidis et al.~\cite{Kerenidis2012}, which in turn is a modification of a protocol due to Braverman~\cite{Braverman2012}\footnote{A protocol, achieving similar task, however working only for product distributions on inputs was first shown by Jain, Radhakrishnan and Sen~\cite{Jain2005}.} (a variation of which also appears in~\cite{BravermanWeinstein2011}). However our analysis of the protocol's correctness deviates significantly in parts from the earlier works~\cite{Kerenidis2012,Braverman2012,BravermanWeinstein2011} due to the fact that for us the marginal distribution of $X^1Y^1$ need not be the same as that of $\mu$, in fact for some inputs $(x,y)$, the probability under the two distributions can be significantly different. 

There is another important original contribution of our work, not present in the previous works~\cite{Kerenidis2012,Braverman2012,BravermanWeinstein2011}. We observe a crucial property of the protocol $\Pi'$ which turns out to be very important in our arguments. The property is that the bad inputs $(x,y)$ for which the distribution of $\Pi'$'s sample for $R^1_jM^1$, conditioned on non-abort, deviates a lot from the desired  $R^1_jM^1 |~(X^1Y^1= xy)$, their probability is nicely reduced (as compared to $\pr{}{X^1Y^1=xy}$) in the final distribution of $\Pi'$, conditioned on non-abort.  This helps us to argue that the distribution of inputs and outputs in $\Pi'$, conditioned on non-abort, is close in $\ell_1$ distance to $X^1_jY^1_jR^1_jM^1$, implying good success in $\Pi'$, conditioned on non-abort.

\vspace{0.2cm}

\noindent {\it Organization}.  In Section \ref{sec:preliminary}, we present some necessary background, definitions and preliminaries. In Section \ref{sec:proof}, we prove our main result Theorem \ref{thm:sdptsrec}.  We defer some proofs to Appendix due to lack of space.


\section{Preliminary}
\label{sec:preliminary}

\subsection*{Information theory}

We use capital letters  e.g. $X,Y,Z$ or letters in bold e.g. $\mathbf{a},\mathbf{b}, \mbox{\boldmath$\alpha$},\mbox{\boldmath$\beta$}$
 to represent random variables and use calligraphic letters e.g. $\X,\Y,\Z$ to represent sets.
For integer $n \geq 1$, let $[n]$ represent the set $\{1,2, \ldots,
n\}$. Let $\X$, $\Y$ be finite sets and $k$ be a natural number. Let
$\X^k$ be the set $\X\times\cdots\times\X$, the cross product of
$\X$, $k$ times. Let $\mu$ be a (probability) distribution on $\X$.
Let $\mu(x)$ represent the probability of $x\in\X$ according to
$\mu$. For any subset $S\subseteq \X$, define $\mu(S)\defeq\sum_{x\in S}\mu(x)$.
 Let $X$ be a random variable distributed according to $\mu$,
which we denote by $X\sim\mu$. We use the same symbol to represent
a random variable and its distribution whenever it is clear from
the context.  The expectation value of  function $f$ on $\X$ is
denoted as
$\expec{x \leftarrow X}{f(x)} \defeq\sum_{x\in\X} \prob{X=x} \cdot f(x). $
The entropy of $X$ is defined as
$\mathrm{H}(X)\defeq-\sum_x\mu(x) \cdot \log\mu(x)$ ($\log, \ln $ represent logarithm to the base $2, e$ repectively). For two distributions
$\mu$, $\lambda$ on $\X$, the distribution $\mu \otimes \lambda$ is
defined as $(\mu\otimes\lambda)(x_1,x_2)\defeq\mu(x_1)\cdot\lambda(x_2)$.
Let $\mu^k\defeq\mu\otimes\cdots\otimes\mu$, $k$ times. The $\ell_1$
distance between $\mu$ and $\lambda$ is defined to be half of the
$\ell_1$ norm of $\mu - \lambda$; that is,
$\|\lambda-\mu\|_1\defeq\frac{1}{2}\sum_x|\lambda(x)-\mu(x)|=\max_{S\subseteq\X}|\lambda(S)-\mu(S)|$. We say that
$\lambda$ is $\ve$-close to $\mu$ if $\|\lambda-\mu\|_1\leq\ve$. The
relative entropy between distributions $X$ and $Y$ on $\X$ is
defined as
 $\relent{X}{Y} \defeq \expec{x\leftarrow X}{\log \frac{\prob{X=x}}{\prob{Y=x}}} .$
The relative min-entropy between them is defined as
$ \rminent{X}{Y} \defeq \max_{x\in\X}
    \set{ \log \frac{\prob{X=x}}{\prob{Y=x}} }.$
It is easy to see that $\relent{X}{Y} \leq \rminent{X}{Y}$. Let
$X,Y,Z$ be jointly distributed random variables. Let $Y_x$ denote the
distribution of $Y$ conditioned on $X=x$. The conditional entropy of
$Y$ conditioned on $X$ is defined as $\mathrm{H}(Y|X) \defeq
\expec{x\leftarrow X}{\mathrm{H}(Y_x)} =
\mathrm{H}(XY)-\mathrm{H}(X)$. The mutual information between $X$
and $Y$ is defined as:
$  \mutinf{X}{Y} \defeq \mathrm{H}(X)+\mathrm{H}(Y)-\mathrm{H}(XY)
    = \expec{y \leftarrow Y}{\relent{X_y}{X}}
    = \expec{x \leftarrow X}{\relent{Y_x}{Y}}. $
The conditional mutual information between $X$ and $Y$, conditioned on
$Z$, is defined as:
$  \condmutinf{X}{Y}{Z} \defeq
    \expec{z \leftarrow Z}{\condmutinf{X}{Y}{Z=z}}
    = \mathrm{H}\br{X|Z}+\mathrm{H}\br{Y|Z}-\mathrm{H}\br{XY|Z} .$
The following {\em chain rule} for mutual information is easily
seen :
$\mutinf{X}{YZ} = \mutinf{X}{Z} + \condmutinf{X}{Y}{Z} .$

We will need the following basic facts. A very good text for
reference on information theory is~\cite{CoverT91}.
\begin{fact}
\label{fact:relative entropy joint convexity} Relative entropy is
jointly convex in its arguments. That is, for distributions $\mu,
\mu^1, \lambda, \lambda^1 \in \X$ and $p\in[0,1]$:
$ \relent{p \mu  + (1-p) \mu^1}{\lambda + (1-p) \lambda^1} \leq p \cdot \relent{\mu}{\lambda} + (1-p) \cdot \relent{\mu^1}{\lambda^1} .$
\end{fact}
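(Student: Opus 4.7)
The plan is to reduce joint convexity to the scalar log-sum inequality applied pointwise in $x$. First I would recall the log-sum inequality: for nonnegative reals $a_1, a_2$ and positive $b_1, b_2$,
\[
a_1\log\frac{a_1}{b_1} + a_2\log\frac{a_2}{b_2} \;\geq\; (a_1+a_2)\log\frac{a_1+a_2}{b_1+b_2},
\]
with the usual convention $0\log 0 = 0$. This follows in one line from Jensen's inequality applied to the convex function $\phi(t) = t\log t$, taking weights $w_i = b_i/(b_1+b_2)$ and values $t_i = a_i/b_i$, so that $\sum_i w_i \phi(t_i) \geq \phi(\sum_i w_i t_i)$ rearranges into the displayed inequality.

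Next, for each fixed $x \in \X$, I would instantiate the log-sum inequality with $a_1 = p\microspace\mu(x)$, $a_2 = (1-p)\microspace\mu^1(x)$, $b_1 = p\microspace\lambda(x)$, and $b_2 = (1-p)\microspace\lambda^1(x)$. The scalar factors $p$ and $(1-p)$ cancel inside each logarithm on the left-hand side, producing
\[
p\microspace\mu(x)\log\frac{\mu(x)}{\lambda(x)} + (1-p)\microspace\mu^1(x)\log\frac{\mu^1(x)}{\lambda^1(x)} \;\geq\; \nu(x)\log\frac{\nu(x)}{\sigma(x)},
\]
where I abbreviate $\nu \defeq p\microspace\mu + (1-p)\microspace\mu^1$ and $\sigma \defeq p\microspace\lambda + (1-p)\microspace\lambda^1$. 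Summing this pointwise inequality over $x \in \X$ yields precisely
\[
p \cdot \relent{\mu}{\lambda} + (1-p)\cdot\relent{\mu^1}{\lambda^1} \;\geq\; \relent{\nu}{\sigma},
\]
which is the desired joint convexity.

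The only technicality is handling the support conditions. If some $\lambda(x) = 0$ but $\mu(x) > 0$, then $\relent{\mu}{\lambda} = +\infty$ and the right-hand side of the claim is trivially an upper bound; otherwise absolute continuity holds for each pair and the denominators $b_1 + b_2$ are strictly positive wherever the summand contributes, so the termwise application is legitimate under the standard conventions. This step is entirely routine, and I do not anticipate any genuine obstacle — the whole proof is essentially one invocation of Jensen followed by summation.
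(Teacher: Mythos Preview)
Your proof is correct and is exactly the standard textbook argument via the log-sum inequality. The paper does not actually prove this statement at all: it is listed as a basic fact with a reference to Cover and Thomas~\cite{CoverT91}, whose proof is the same log-sum/Jensen argument you give.
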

\begin{fact}
    \label{fact:relative entropy splitting}
Relative entropy satisfies the following chain rule. Let $XY$ and
$X^1Y^1$ be random variables on $\X\times\Y$.
    It holds that:
    $ \relent{X^1Y^1}{XY} = \relent{X^1}{X}
    + \expec{x\leftarrow X^1} {\relent{Y^1_x}{Y_x}}.$
    In particular,
    $ \relent{X^1Y^1}{X\otimes Y}
    = \relent{X^1}{X} + \expec{x\leftarrow X^1}{\relent{Y^1_x}{Y}}
    \geq \relent{X^1}{X} + \relent{Y^1}{Y}.$ The last inequality follows from Fact~\ref{fact:relative entropy joint convexity}.
\end{fact}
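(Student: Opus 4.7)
The plan is to unpack the definition of relative entropy and factor the joint probabilities by the usual probabilistic chain rule $\Pr[X^1Y^1=xy]=\Pr[X^1=x]\cdot\Pr[Y^1_x=y]$, and likewise for $XY$. Writing
$\relent{X^1Y^1}{XY} = \sum_{x,y} \Pr[X^1Y^1=xy]\,\log \frac{\Pr[X^1Y^1=xy]}{\Pr[XY=xy]}$
and substituting the factored form, the logarithm splits as a sum $\log\frac{\Pr[X^1=x]}{\Pr[X=x]} + \log\frac{\Pr[Y^1_x=y]}{\Pr[Y_x=y]}$. Summing the first term over $y$ collapses it to $\relent{X^1}{X}$, and summing the second term over $y$ (with the weight $\Pr[Y^1_x=y]$) yields $\relent{Y^1_x}{Y_x}$, which is then averaged against $\Pr[X^1=x]$. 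This gives exactly the chain-rule identity. I would handle zero-probability entries via the standard convention that $0\log 0 = 0$ and restrict the sum to the support of $X^1Y^1$, so that the conditional distributions $Y^1_x$ are well-defined wherever they appear.

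For the product-distribution specialization $\relent{X^1Y^1}{X\otimes Y}$, the only change is that $Y_x=Y$ for every $x$ (since the reference distribution is a product), so the chain-rule identity immediately becomes $\relent{X^1}{X} + \expec{x\leftarrow X^1}{\relent{Y^1_x}{Y}}$. For the final inequality, I would write $Y^1 = \sum_{x} \Pr[X^1=x]\cdot Y^1_x$ and $Y = \sum_{x}\Pr[X^1=x]\cdot Y$ as the same convex combination (the second being trivial, since all summands equal $Y$), and then apply the joint convexity of relative entropy from Fact~\ref{fact:relative entropy joint convexity} (generalized to an arbitrary convex combination) to conclude $\relent{Y^1}{Y}\leq \expec{x\leftarrow X^1}{\relent{Y^1_x}{Y}}$.

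There is no real obstacle here; the statement is essentially a line-by-line bookkeeping exercise once the factorization of joint distributions is in place. The only minor care needed is the convention on $0\log 0$ and ensuring that the support conditions line up so that absolute continuity of $X^1Y^1$ with respect to $XY$ implies absolute continuity of $X^1$ with respect to $X$ and of $Y^1_x$ with respect to $Y_x$ for $X^1$-almost every $x$; otherwise both sides are $+\infty$ and the identity still holds trivially.
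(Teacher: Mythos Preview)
Your proposal is correct and follows the standard textbook derivation; the paper itself does not give a proof of this fact beyond the single remark that ``the last inequality follows from Fact~\ref{fact:relative entropy joint convexity},'' which is exactly the step you outline by writing $Y^1$ as a convex combination of the $Y^1_x$ and applying joint convexity. Your treatment of the chain-rule identity via factoring the joint distribution and splitting the logarithm is precisely the intended argument, and your handling of supports and the $0\log 0$ convention is appropriate.
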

\begin{fact} \label{fact:mutinf is min}
    Let $XY$ and $X^1Y^1$ be random variables on $\X\times\Y$.
    It holds that
    \[  \relent{X^1Y^1}{X\otimes Y}
    \geq \relent{X^1Y^1}{X^1\otimes Y^1}=\mutinf{X^1}{Y^1}. \]
\end{fact}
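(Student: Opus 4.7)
The plan is to prove both pieces (the equality and the inequality) by a direct expansion of the definition of relative entropy and by invoking the two preceding facts about relative entropy already established in the excerpt. The equality $\relent{X^1Y^1}{X^1\otimes Y^1}=\mutinf{X^1}{Y^1}$ is essentially tautological: plugging the product distribution $X^1\otimes Y^1$ into the definition of $\mathrm{S}(\cdot\|\cdot)$ yields exactly $\sum_{x,y}\Pr[X^1{=}x,Y^1{=}y]\log\frac{\Pr[X^1{=}x,Y^1{=}y]}{\Pr[X^1{=}x]\Pr[Y^1{=}y]}$, which matches the formula $\mutinf{X^1}{Y^1}=\expec{y\leftarrow Y^1}{\relent{X^1_y}{X^1}}$ recorded in the preliminaries. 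So this half is just a one-line unfolding.

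For the inequality $\relent{X^1Y^1}{X\otimes Y}\ge \relent{X^1Y^1}{X^1\otimes Y^1}$, my plan is to compute the difference of the two sides directly. Writing $P^1$ for the distribution of $X^1Y^1$ and $P$ for the product $X\otimes Y$, one gets
\[
\relent{X^1Y^1}{X\otimes Y}-\relent{X^1Y^1}{X^1\otimes Y^1}
=\sum_{x,y}P^1(x,y)\log\frac{\Pr[X^1{=}x]\Pr[Y^1{=}y]}{\Pr[X{=}x]\Pr[Y{=}y]},
\]
and the right-hand side splits by marginalization into $\relent{X^1}{X}+\relent{Y^1}{Y}$, which is nonnegative by Gibbs' inequality (equivalently, by Fact~\ref{fact:relative entropy joint convexity} applied with the trivial convex combination). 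This yields the claim. Alternatively, one could invoke Fact~\ref{fact:relative entropy splitting} twice, once to decompose $\relent{X^1Y^1}{X\otimes Y}=\relent{X^1}{X}+\expec{x\leftarrow X^1}{\relent{Y^1_x}{Y}}$ and once with $X=X^1,Y=Y^1$ to get $\relent{X^1Y^1}{X^1\otimes Y^1}=\expec{x\leftarrow X^1}{\relent{Y^1_x}{Y^1}}$, and then verify by a short calculation that $\expec{x\leftarrow X^1}{\relent{Y^1_x}{Y}}=\expec{x\leftarrow X^1}{\relent{Y^1_x}{Y^1}}+\relent{Y^1}{Y}$, whence the difference is again $\relent{X^1}{X}+\relent{Y^1}{Y}\ge 0$.

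There is no real obstacle here; the only thing to be careful about is bookkeeping of the conditional versus marginal distributions so that the cross terms cancel correctly, and to note that the direct expansion approach is cleaner and avoids having to re-derive the splitting identity in the ``mixed'' form where one argument is joint and the other is a product.
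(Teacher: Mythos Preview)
Your proof is correct. The paper itself does not supply a proof of this fact; it is listed among the ``basic facts'' in the preliminaries with a blanket reference to Cover and Thomas, so there is nothing to compare against beyond noting that your argument is the standard one. Both of your routes are fine: the direct expansion showing the difference equals $\relent{X^1}{X}+\relent{Y^1}{Y}\ge 0$ is the cleanest, and your alternative via Fact~\ref{fact:relative entropy splitting} is essentially how the paper would expect a reader to reconstruct it, since the ``In particular'' clause of Fact~\ref{fact:relative entropy splitting} already records $\relent{X^1Y^1}{X\otimes Y}\ge \relent{X^1}{X}+\relent{Y^1}{Y}$, and specializing to $X=X^1$, $Y=Y^1$ gives the equality with mutual information.
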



The following fact follows from  Fact \ref{fact:relative entropy splitting} and Fact \ref{fact:mutinf is min}.
\begin{fact}
   \label{fact:12}
   Given random variables $XY$ and $X'Y'$ on $\X\times\Y$, it holds that
   $$\expec{x\leftarrow X'}{\relent{Y'_x}{Y}}\geq\expec{x\leftarrow X'}{\relent{Y'_x}{Y'}}=\mutinf{X'}{Y'}.$$
\end{fact}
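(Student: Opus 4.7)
The plan is to prove Fact~\ref{fact:12} directly by massaging both the left-hand and right-hand expressions into relative entropies of the joint distribution $X'Y'$ against appropriate product distributions, and then comparing them using the two preceding facts.

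First I would rewrite the left-hand side. Applying the chain rule for relative entropy (Fact~\ref{fact:relative entropy splitting}) to the pair $X'Y'$ on one side and the product distribution $X' \otimes Y$ on the other, we get
\[
\relent{X'Y'}{X' \otimes Y} = \relent{X'}{X'} + \expec{x \leftarrow X'}{\relent{Y'_x}{Y}} = \expec{x \leftarrow X'}{\relent{Y'_x}{Y}},
\]
since $\relent{X'}{X'} = 0$. So the left-hand side is just $\relent{X'Y'}{X' \otimes Y}$.

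Next I would handle the right-hand side analogously. Applying Fact~\ref{fact:relative entropy splitting} with $Y'$ in place of $Y$ yields
\[
\relent{X'Y'}{X' \otimes Y'} = \relent{X'}{X'} + \expec{x \leftarrow X'}{\relent{Y'_x}{Y'}} = \expec{x \leftarrow X'}{\relent{Y'_x}{Y'}},
\]
and by the definition of mutual information this last quantity equals $\mutinf{X'}{Y'}$, giving the claimed equality on the right.

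Finally, the inequality connecting the two sides is immediate from Fact~\ref{fact:mutinf is min} applied with the pair $X'Y'$ (in the role of $X^1Y^1$) against the product $X' \otimes Y$: it gives
\[
\relent{X'Y'}{X' \otimes Y} \;\geq\; \relent{X'Y'}{X' \otimes Y'} \;=\; \mutinf{X'}{Y'}.
\]
Chaining these three observations yields the statement. There is no real obstacle here; the only thing to be careful about is to invoke the chain rule in the right form (taking relative entropy of $X'$ against itself so it vanishes) so that both sides collapse to averages of $\relent{Y'_x}{\cdot}$ that Fact~\ref{fact:mutinf is min} can directly compare.
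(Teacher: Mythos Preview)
Your proof is correct and follows exactly the approach indicated in the paper, which simply states that Fact~\ref{fact:12} follows from Fact~\ref{fact:relative entropy splitting} and Fact~\ref{fact:mutinf is min}. You have spelled out precisely how those two facts combine, with no gaps.
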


\begin{fact}
    \label{fact:one norm and rel ent}
    For distributions $\lambda$ and $\mu$: \quad $0 \leq \onenorm{\lambda-\mu} \leq \sqrt{\relent{\lambda}{\mu}}$.
\end{fact}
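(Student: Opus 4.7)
The statement is a form of Pinsker's inequality, so the natural plan is the standard two-step reduction: first collapse $\lambda$ and $\mu$ to a pair of Bernoulli distributions via a data-processing argument, then verify the Bernoulli case by a one-line convexity computation.

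For the first step, I would select the optimal binary event $A = \{x \in \X : \lambda(x) \geq \mu(x)\}$ and set $p = \lambda(A)$, $q = \mu(A)$. By the definition of $\onenorm{\cdot}$ given in the preliminaries, $\onenorm{\lambda - \mu} = p - q$, so it suffices to show $(p-q)^2 \leq \relent{\lambda}{\mu}$. To relate the right-hand side to the Bernoulli relative entropy between $(p,1-p)$ and $(q,1-q)$, I would invoke data processing. Concretely, consider the joint distributions of $(X, B)$ under $\lambda$ and $\mu$ where $B$ is the indicator of $A$. Applying the chain rule (Fact \ref{fact:relative entropy splitting}) with $X$ first makes the conditional term vanish (since $B$ is a deterministic function of $X$), giving that the joint relative entropy equals $\relent{\lambda}{\mu}$. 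Applying the chain rule instead with $B$ first bounds the joint relative entropy from below by $\relent{(p,1-p)}{(q,1-q)}$, using nonnegativity of the remaining conditional relative entropy term. Combining, $\relent{\lambda}{\mu} \geq \relent{(p,1-p)}{(q,1-q)}$.

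For the second step, with logs base $2$ as in the paper, I would fix $q \in (0,1)$ and analyze the single-variable function
\[
g(p) \;:=\; p \log\frac{p}{q} + (1-p) \log\frac{1-p}{1-q} - (p-q)^2.
\]
Direct differentiation gives $g(q)=0$, $g'(q)=0$, and $g''(p) = \tfrac{1}{p(1-p)\ln 2} - 2$. Since $p(1-p) \leq 1/4$ on $[0,1]$ and $1/\ln 2 > 1$, we get $g''(p) > 0$ throughout $(0,1)$, so $g$ is strictly convex. A convex function on an interval with value and derivative both $0$ at an interior point is nonnegative everywhere on the interval, which yields $(p-q)^2 \leq \relent{(p,1-p)}{(q,1-q)}$ for interior $p$; the boundary $p \in \{0,1\}$ follows by continuity under the convention $0 \log 0 = 0$, and the degenerate case $q \in \{0,1\}$ (where the divergence is $+\infty$ unless $p = q$) is trivial.

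There is no real obstacle here; the only mildly delicate point is the logarithm base. Because the paper uses $\log = \log_2$, the Bernoulli inequality we are proving is strictly weaker than the sharp nats version $(p-q)^2 \leq \tfrac{1}{2}\relent{\cdot}{\cdot}_{\text{nats}}$, and this slack of a factor of $2/\ln 2$ on the right-hand side is precisely what makes the convexity check $g'' > 0$ go through without any sharp estimate. So the whole argument is essentially a short calculus exercise once the data-processing reduction is in place.
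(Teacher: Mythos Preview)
Your argument is correct: the data-processing reduction to the Bernoulli case followed by the convexity computation on $g$ is a standard and complete proof of (this weakened form of) Pinsker's inequality, and your handling of the base-$2$ logarithm is accurate. The paper itself does not prove this fact; it is simply listed among the basic information-theoretic facts with an implicit reference to \cite{CoverT91}, so there is nothing to compare against.
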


\begin{fact}(\textbf{Classical substate theorem}~\cite{Jain2002})\label{fact:markovofrelent}
   Let $X,X'$ be two distributions on $\X$. For any $\delta\in(0,1)$, it holds that
   \[\pr{x\leftarrow X'}{\frac{\pr{}{X'=x}}{\pr{}{X=x}}\leq 2^{\left(\relent{X'}{X}+1\right)/\delta}}\geq 1-\delta.\]
\end{fact}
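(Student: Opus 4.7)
The plan is to argue by contradiction using a Markov-style argument powered by the log-sum inequality. Abbreviate $p(x)\defeq\pr{}{X'=x}$, $q(x)\defeq\pr{}{X=x}$, $S\defeq\relent{X'}{X}$, and set $t\defeq(S+1)/\delta$ and $B\defeq\set{x\in\X : p(x)/q(x)>2^t}$. The goal is to show $\sum_{x\in B}p(x)\leq\delta$, so I would suppose for contradiction that $p(B)>\delta$ and derive a contradiction to the definition of $S$.

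Step one is to coarse-grain $\X$ to the two-cell partition $\set{B,\bar B}$. Applying the log-sum inequality separately to $B$ and to $\bar B$ (equivalently, the data-processing inequality for relative entropy under this coarse-graining) yields
\[
S\;\geq\;p(B)\log\frac{p(B)}{q(B)} + (1-p(B))\log\frac{1-p(B)}{1-q(B)}.
\]
Step two is to bound each term. Summing the defining inequality $p(x)>2^t q(x)$ over $x\in B$ gives $p(B)>2^tq(B)$, so $\log(p(B)/q(B))>t$, and the first term exceeds $p(B)\cdot t>\delta t=S+1$. For the second term I would split it as $(1-p(B))\log(1-p(B))-(1-p(B))\log(1-q(B))$; the first piece is at least $-(\log e)/e>-1$ by the pointwise inequality $u\log u\geq-(\log e)/e$ on $[0,1]$, and the second piece is nonnegative since $\log(1-q(B))\leq 0$. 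Adding these two lower bounds gives $S>(S+1)-1=S$, the desired contradiction.

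The argument is short and presents no serious technical obstacle. The only nontrivial design choice is how to produce the precise additive $+1$ inside the exponent $(S+1)/\delta$ stated in the theorem: that slack must absorb the potentially negative contribution of $\bar B$ to the binary KL divergence, and the elementary estimate $(\log e)/e\approx 0.531<1$ is exactly sharp enough. Small degenerate cases, namely $q(x)=0$ for some $x$ in the support of $p$ (so $S=\infty$ and the statement is vacuous) and $p(B)=1$ (handled by the standard convention $0\log 0=0$), can be dispatched in a single remark.
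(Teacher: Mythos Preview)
Your proof is correct. Note, however, that the paper does not supply its own proof of this fact: it is stated as a cited result (the classical substate theorem, attributed to~\cite{Jain2002}) and used as a black box, so there is no ``paper's proof'' to compare against. What you have written is a clean self-contained derivation via the data-processing inequality for relative entropy (equivalently, the log-sum inequality applied to the two-cell partition $\{B,\bar B\}$), together with the elementary bound $u\log u\geq -(\log e)/e$ to absorb the negative piece; this is indeed the standard way to prove the statement and fully justifies the precise constant $+1$ in the exponent. Your handling of the degenerate cases ($S=\infty$ and $p(B)=1$) is also fine.
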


%

%
%

We will need the following lemma. Its proof is deferred to Appendix.
\begin{lemma}
   \label{lem:ratiovs1}
   Given random variables $A$, $A'$ and $\ve>0$, if $\onenorm{A-A'}\leq\ve$, then for any $r\in(0,1)$,
   \begin{eqnarray*}
   &\pr{a\leftarrow A}{\abs{1-\frac{\pr{}{A'=a}}{\pr{}{A=a}}}\leq\frac{\ve}{r}}\geq1-2r; ~\text{and}~ \\
   &\pr{a\leftarrow A'}{\abs{1-\frac{\pr{}{A'=a}}{\pr{}{A=a}}}\leq\frac{\ve}{r}}\geq1-2r-\ve.
   \end{eqnarray*}
\end{lemma}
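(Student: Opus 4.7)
The plan is to let $\mu(a) = \Pr[A=a]$ and $\mu'(a) = \Pr[A'=a]$, and then do a direct ``bad event'' analysis. Define the bad set
\[
B \defeq \set{a : \abs{1 - \mu'(a)/\mu(a)} > \ve/r} = \set{a : \abs{\mu(a) - \mu'(a)} > (\ve/r)\,\mu(a)}.
\]
The two probabilities we want to upper bound are exactly $\pr{}{A \in B}$ and $\pr{}{A' \in B}$.

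For the first inequality, I would use the observation that on the set $B$ the pointwise gap $|\mu(a)-\mu'(a)|$ is at least $(\ve/r)\mu(a)$, so summing over $a\in B$ gives $(\ve/r)\,\pr{}{A\in B}\le \sum_{a\in B}|\mu(a)-\mu'(a)|$. Since $\onenorm{A-A'}\le\ve$ and the convention in the preliminaries is $\onenorm{\cdot-\cdot}=\frac12\sum_a|\mu(a)-\mu'(a)|$, the right-hand side is bounded by $2\ve$. Rearranging yields $\pr{}{A\in B}\le 2r$, which is exactly the first claim.

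For the second inequality, I would piggy-back on the first. Using the dual characterization $\onenorm{A-A'}=\max_{S}|\pr{}{A\in S}-\pr{}{A'\in S}|$ stated in the preliminaries, applied with $S=B$, we get $\pr{}{A'\in B}\le \pr{}{A\in B}+\ve \le 2r+\ve$, which gives the second bound.

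I do not expect any real obstacle here: the lemma is essentially a reformulation of Markov's inequality applied to the pointwise deviation $|\mu(a)-\mu'(a)|/\mu(a)$, combined with the standard fact that total variation controls the change in probability of any single event. The only thing to be slightly careful about is the factor of $2$ coming from the $\frac12$ in the definition of $\onenorm{\cdot-\cdot}$, which is what produces the $2r$ rather than $r$ on the right-hand side.
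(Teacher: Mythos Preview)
Your proposal is correct and matches the paper's proof essentially line for line: the paper defines the good set $G=\set{a:\abs{1-\pr{}{A'=a}/\pr{}{A=a}}\le \ve/r}$ (your $B^c$), bounds $\sum_{a\notin G}\abs{\pr{}{A=a}-\pr{}{A'=a}}$ from below by $(\ve/r)\pr{a\leftarrow A}{a\notin G}$ and from above by $2\ve$, and then says the second inequality ``follows immediately,'' which is exactly your total-variation argument $\pr{}{A'\in B}\le \pr{}{A\in B}+\ve$.
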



\subsection*{Communication complexity}

Let $\X,\Y,\Z$ be finite sets, $f\subseteq\X\times\Y\times\Z$ be a relation and $\ve >0$. In a two-way public-coin communication protocol,
Alice is given $x\in\X$, and Bob is given $y\in\Y$. They are supposed to output $z\in\Z$ such that $(x,y,z)\in f$ via exchanging messages and doing local computations. They may share public coins before the inputs are revealed to them. We assume that the last $\lceil\log|\Z|\rceil$ bits of the transcript is the output of the protocol. Let $\mathrm{R}^{\text{pub}}_{\ve}(f)$ represent
the two-way public-coin randomized communication complexity of $f$ with the worst case error $\ve$, that is the communication of the best two-way public-coin protocol for $f$ with error for each input $(x,y)$ being at most $\ve$. Let $\mu$ be a distribution on $\X\times\Y$. Let $\mathrm{D}^{\mu}_{\ve}(f)$ represent the two-way distributional communication complexity of $f$ under distribution $\mu$ with distributional error $\ve$, that is the communication of the best
two-way deterministic protocol for $f$, with average error over the distribution of the inputs drawn from $\mu$, at most $\ve$. Following is Yao's min-max principle which connects the worst case error and the distributional error settings, see. e.g., ~\cite[Theorem~3.20, page~36]{Kushilevitz96}.
\begin{fact}\label{fact:yaos principle}\cite{Yao:1979}
$\mathrm{R}^{\mathrm{pub}}_{\ve}(f)=\max_{\mu}\mathrm{D}^{\mu}_{\ve}(f)$.
\end{fact}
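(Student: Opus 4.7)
The plan is to prove the two inequalities separately, with the easy direction given by an averaging argument and the reverse direction deduced from von Neumann's minimax theorem applied to a suitable finite zero-sum game.

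First I would handle the inequality $\mathrm{R}^{\mathrm{pub}}_{\ve}(f)\geq\max_{\mu}\mathrm{D}^{\mu}_{\ve}(f)$. Fix any distribution $\mu$ on $\X\times\Y$ and let $\Pi$ be an optimal public-coin protocol achieving worst-case error at most $\ve$ with communication $c=\mathrm{R}^{\mathrm{pub}}_{\ve}(f)$. Writing the error of $\Pi$ on input $(x,y)$ as an average over the public randomness $r$ of the indicator that $\Pi_r$ errs on $(x,y)$, and using that this average is at most $\ve$ for every $(x,y)$, Fubini gives that the expectation over $r$ of the $\mu$-distributional error of the deterministic protocol $\Pi_r$ is at most $\ve$. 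Hence some fixing $r^*$ yields a deterministic protocol of communication at most $c$ with $\mu$-error at most $\ve$, so $\mathrm{D}^{\mu}_{\ve}(f)\le c$. Taking max over $\mu$ finishes this direction.

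For the reverse inequality, let $c = \max_{\mu}\mathrm{D}^{\mu}_{\ve}(f)$ and consider the zero-sum matrix game whose pure strategies for the minimizer are deterministic two-way protocols $\Pi$ of communication at most $c$, whose pure strategies for the maximizer are inputs $(x,y)\in\X\times\Y$, and whose payoff is the indicator that $\Pi$ errs on $(x,y)$. Since $\X,\Y$ and the set of deterministic protocols of communication at most $c$ are both finite, von Neumann's minimax theorem applies and the value of the game is the same whether the minimizer or the maximizer randomizes first. The definition of $c$ says exactly that for every mixed maximizer strategy $\mu$ there is a pure minimizer response $\Pi$ with expected loss at most $\ve$; therefore the value of the game is at most $\ve$. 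Dualizing, there is a mixed strategy over deterministic protocols — i.e.\ a public-coin protocol of communication at most $c$ — whose worst-case error over pure inputs is at most $\ve$, giving $\mathrm{R}^{\mathrm{pub}}_{\ve}(f)\le c$.

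The main subtlety, and the only place that could go wrong, is justifying the use of minimax; this reduces to observing that the game is truly finite once the communication budget $c$ is fixed (only finitely many deterministic communication protocols of a given communication bound exist up to equivalence on inputs), so no compactness or Sion-style extension is needed. I would include a brief remark that the same argument applies verbatim when protocols are measured by other complexity measures that induce a finite strategy set for each bound, which is why Yao's principle is so broadly used. Combining the two directions yields the claimed equality.
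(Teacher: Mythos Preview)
The paper does not prove this statement at all; it is presented as a cited fact (Yao's minimax principle) with a reference to Kushilevitz--Nisan, so there is no ``paper's own proof'' to compare against. Your argument is the standard textbook proof and is correct: the averaging direction is immediate, and the reverse direction via von Neumann's minimax theorem on the finite game between deterministic protocols of bounded communication and input pairs is exactly the usual justification, with the finiteness remark properly handling the only potential subtlety.
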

The following fact can be easily verified by induction on the number of message exchanges in a private-coin protocol (please refer for example to~\cite{Braverman2012} for an explicit proof). It is also implicit in the {\em cut and paste} property of private-coins protocol used in Bar-Yossef, Jayram, Kumar and Sivakumar~\cite{Bar-Yossef2002a}.
\begin{lemma} \label{lem:privateccdistribution}
For any private-coin two-way communication protocol, with input $XY\sim\mu$ and transcript $M \in \M$, the joint distribution can be written as
\[\pr{}{XYM=xym}=\mu(x,y)u_x(m)u_y(m),\]
where $u_x:\M\rightarrow[0,1]$ and $u_y:\M\rightarrow[0,1]$, for all $(x,y)\in\X\times\Y$.
\end{lemma}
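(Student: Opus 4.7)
The plan is a direct proof by unpacking the syntax of a private-coin protocol together with the independence of the two parties' private randomness. Model Alice's private coins as $R_A\sim\rho_A$ and Bob's as $R_B\sim\rho_B$, and assume $(X,Y),R_A,R_B$ are mutually independent with $(X,Y)\sim\mu$. Without loss of generality Alice sends $M_i$ for odd $i$ and Bob for even $i$. The defining property of a private-coin protocol is that the $i$-th message is a deterministic function of the speaker's input, the speaker's private randomness, and the transcript so far; that is, $M_i=\phi_i(X,M_{<i},R_A)$ for odd $i$ and $M_i=\psi_i(Y,M_{<i},R_B)$ for even $i$, for some deterministic $\phi_i,\psi_i$.

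Fix $(x,y,m)$. Once $m$ is held fixed, its prefixes $m_{<i}$ are determined, so the event ``$M=m$'' given $X=x,Y=y$ is the conjunction over $i$ of the message-level consistency events, each involving only one party's randomness. Hence
\begin{align*}
\pr{}{XYM=xym}
&=\mu(x,y)\sum_{r_A,r_B}\rho_A(r_A)\rho_B(r_B)\\
&\quad\cdot\prod_{i\text{ odd}}\I[\phi_i(x,m_{<i},r_A)=m_i]\prod_{i\text{ even}}\I[\psi_i(y,m_{<i},r_B)=m_i].
\end{align*}
The indicators split into a factor depending only on $(x,m,r_A)$ and one depending only on $(y,m,r_B)$, so the double sum factors. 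Setting
\[u_x(m)\defeq\sum_{r_A}\rho_A(r_A)\prod_{i\text{ odd}}\I[\phi_i(x,m_{<i},r_A)=m_i],\quad u_y(m)\defeq\sum_{r_B}\rho_B(r_B)\prod_{i\text{ even}}\I[\psi_i(y,m_{<i},r_B)=m_i],\]
yields $\pr{}{XYM=xym}=\mu(x,y)u_x(m)u_y(m)$. Each of $u_x(m),u_y(m)$ lies in $[0,1]$ because each is itself a probability under $\rho_A$ or $\rho_B$ — the probability that the speaker's scripted responses are consistent with the prescribed messages in $m$, given that speaker's input.

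If one prefers to avoid writing out the protocol's deterministic skeleton, the same conclusion follows by induction on the number of rounds $k$. The base case $k=0$ takes $u_x=u_y=1$. For the step, assume the factorization holds for the prefix $M_{\leq i}$; if round $i+1$ is Alice's, then $M_{i+1}$ depends only on $X,M_{\leq i},R_A$, and $R_A$ is independent of $(Y,R_B)$ even after conditioning on $(X,M_{\leq i})$, so $\pr{}{M_{i+1}=m_{i+1}\mid XYM_{\leq i}=xym_{\leq i}}=\pr{}{M_{i+1}=m_{i+1}\mid XM_{\leq i}=xm_{\leq i}}$, a quantity in $[0,1]$ depending only on $x$ and $m_{\leq i+1}$. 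Absorbing it into $u_x$ preserves the invariant, with the symmetric step when Bob speaks. Conditioning is justified on events of positive probability; on events of probability zero the factorization holds trivially since both sides vanish. There is no significant obstacle — the lemma is essentially a repackaging of the independence of the two parties' private randomness together with the fact that each message is produced locally by one party.
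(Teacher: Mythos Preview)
Your proposal is correct and matches the paper's own approach: the paper does not spell out a proof but simply remarks that the lemma ``can be easily verified by induction on the number of message exchanges in a private-coin protocol'' (citing Braverman for an explicit argument), which is precisely your second argument. Your first argument, factoring the conditional probability over the independent private coins $R_A,R_B$, is an equally valid and slightly more explicit route to the same conclusion.
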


\subsection*{Smooth rectangle bound}

Let $f\subseteq\X\times\Y\times\Z$ be a relation and $\ve,\delta \geq 0$. With a slight abuse of notation, we write $f(x,y)\defeq\set{z\in\Z | ~(x,y,z)\in f}$, and $f^{-1}(z)\defeq\set{(x,y):(x,y,z)\in f}$.

\begin{definition}\label{def:srec}(\textbf{Smooth-rectangle bound}~\cite{JainKlauck2010})
The $(\ve,\delta)$-smooth rectangle bound of $f$, denoted by
$\lsrec{}{\ve,\delta}{f}$, is defined as follows:
\begin{eqnarray*}
&&\lsrec{}{\ve,\delta}{f}\defeq\max\{\lsrec{\lambda}{\ve,\delta}{f}| ~\lambda\ \text{a distribution over}\ \X\times\Y\};\\
&&\lsrec{\lambda}{\ve,\delta}{f}\defeq\max\{\lsrec{z,\lambda}{\ve,\delta}{f}|~z\in\Z\};\\
&&\lsrec{z,\lambda}{\ve,\delta}{f}\defeq\max\{\lrec{z,\lambda}{\ve}{g}|~g\subseteq\X\times\Y\times\Z;\\
&&\pr{(x,y)\leftarrow\lambda}{f(x,y)\neq g(x,y)}\leq\delta\};\\
&&\lrec{z,\lambda}{\ve}{g}\defeq\min\{\rminent{\lambda_R}{\lambda}| ~R\ \text{is a rectangle in $\X\times\Y$},\\
&&\lambda(g^{-1}(z)\cap R)\geq(1-\ve)\lambda(R)\}.
\end{eqnarray*}

\end{definition}

When $\delta=0$, the smooth rectangle bound equals the rectangle bound (a.k.a. the corruption bound)
\cite{Yao:1983:LBP:1382437.1382849,Babai:1986:CCC:1382439.1382962,Razborov92,Klauck2003,Beame:2006:SDP:1269083.1269088}. Definition \ref{def:srec} is a generalization of the one in \cite{JainKlauck2010}, where it is only defined
for boolean functions. The smooth rectangle bound is a lower bound on the two-way public-coin communication complexity. The proof of the following lemma appears in Appendix.
\begin{lemma}
\label{lem:Dgeqsrec}
Let  $f\subseteq\X\times\Y\times\Z$ be a relation.  Let $\lambda \in \X \times \Y$ be a distribution and let $z \in \Z$.  Let 
$\beta \defeq \pr{(x,y)\leftarrow\lambda}{f(x,y)=\set{z}}$.  Let $\ve, \ve', \delta >0$ be such that $\frac{\delta+\ve}{\beta-2\ve} < (1+\ve') \frac{\delta}{\beta}.$ Then, 
$$ \mathrm{R}_{\ve}(f) \geq \mathrm{D}^{\lambda}_{\ve}(f) \geq \lsrec{z,\lambda}{(1+\ve')\delta/\beta,\delta} {f} - \log \frac{4}{\ve} .$$
\end{lemma}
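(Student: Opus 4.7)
The first inequality $\mathrm{R}_\ve(f)\geq\mathrm{D}^\lambda_\ve(f)$ is immediate from Yao's min-max principle (Fact \ref{fact:yaos principle}). For the second inequality, the plan is to take a deterministic protocol $\Pi$ for $f$ of communication $c=\mathrm{D}^\lambda_\ve(f)$ with distributional error at most $\ve$ under $\lambda$, and then to show that for \emph{every} smoothing $g\subseteq\X\times\Y\times\Z$ with $\pr{(x,y)\leftarrow\lambda}{f(x,y)\neq g(x,y)}\leq\delta$ one can exhibit a single leaf-rectangle of $\Pi$ witnessing $\lrec{z,\lambda}{(1+\ve')\delta/\beta}{g}\leq c+\log(4/\ve)$. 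Since $\lsrec{z,\lambda}{(1+\ve')\delta/\beta,\delta}{f}$ is by definition the maximum of $\lrec{z,\lambda}{(1+\ve')\delta/\beta}{g}$ over such $g$, this suffices.

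The leaves of $\Pi$ partition $\X\times\Y$ into at most $2^c$ rectangles $\{R_i\}$, each labelled with a fixed output $z_i\in\Z$. Set $S=\{i:z_i=z\}$, $a_i=\lambda(R_i)$ for $i\in S$, and $T=\bigcup_{i\in S}R_i$. The only way the protocol can be correct on the $\beta$-mass set $\{(x,y):f(x,y)=\{z\}\}$ is to land in $T$, so the overall error bound $\ve$ forces $\lambda(T)\geq\beta-\ve$ and also $\lambda(T\cap\{z\notin f\})\leq\ve$. Fixing an arbitrary smoothing $g$ and setting $e_i(g)=\lambda(R_i\cap\{z\notin g\})$, a triangle inequality between the $f$-error and the $g$-error on $T$ yields $\sum_{i\in S}e_i(g)\leq\ve+\delta$.

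Next, I would classify $i\in S$ as \emph{small} if $a_i<(\ve/4)\cdot 2^{-c}$ and as \emph{bad} if $e_i(g)>(1+\ve')(\delta/\beta)\cdot a_i$. Since $|S|\leq 2^c$, the small rectangles contribute $\lambda$-mass at most $\ve/4$ in total; a Markov estimate using $\sum_{i\in S}e_i(g)\leq\ve+\delta$ bounds the bad contribution by $(\ve+\delta)\beta/((1+\ve')\delta)$. The hypothesis $(\ve+\delta)/(\beta-2\ve)<(1+\ve')\delta/\beta$ implies the weaker inequality $(\ve+\delta)/(\beta-5\ve/4)<(1+\ve')\delta/\beta$ that is actually needed, so the combined small-or-bad mass is strictly less than $\lambda(T)\geq\beta-\ve$, and therefore some $i\in S$ is neither small nor bad.

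For that rectangle $R_i$, the large-mass condition gives $\rminent{\lambda_{R_i}}{\lambda}=\log(1/a_i)\leq c+\log(4/\ve)$, while the low-error condition gives $\lambda(g^{-1}(z)\cap R_i)=a_i-e_i(g)\geq(1-(1+\ve')\delta/\beta)\lambda(R_i)$, so $R_i$ fulfils both requirements in the definition of $\lrec{z,\lambda}{(1+\ve')\delta/\beta}{g}$. The main subtlety is that the smoothing $g$ is chosen adversarially \emph{before} we commit to a rectangle, so the $+\delta$ slack from the triangle inequality is unavoidable; it is exactly this slack that forces the quantitative relation among $\delta,\ve,\beta,\ve'$ assumed in the hypothesis.
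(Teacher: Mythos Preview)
Your proof is correct and follows essentially the same route as the paper's: both arguments take an optimal deterministic protocol, restrict to leaf rectangles labelled $z$ (total $\lambda$-mass $\geq\beta-\ve$, total $g$-error $\leq\ve+\delta$), discard the $\leq\ve/4$ mass of leaves that are too light, use Markov to discard the $<\beta-2\ve$ mass of leaves whose $g$-error rate exceeds $(1+\ve')\delta/\beta$, and keep a surviving leaf as the witness rectangle. The only cosmetic difference is that the paper phrases it by contradiction (setting $c=\lsrec{z,\lambda}{(1+\ve')\delta/\beta,\delta}{f}$ and assuming $\mathrm{D}^\lambda_\ve(f)<c-\log(4/\ve)$) while you argue directly with $c=\mathrm{D}^\lambda_\ve(f)$.
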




\section{Proof}\label{sec:proof}

 The following lemma builds a connection between the zero-communication protocols and the smooth rectangle bound.

\begin{lemma}
  \label{lem:zeroprotocolimpliesrec}
  Let $f\subseteq\X\times\Y\times\Z$, $X'Y' \in \X \times \Y$ be a distribution and $z \in \Z$. Let $\beta \defeq \pr{(x,y)\leftarrow X'Y'}{f(x,y)=\set{z}}$.  Let $c\geq 1$. Let $\ve, \ve', \delta > 0$ be such that $(\delta+2\ve)/(\beta-3\ve) < (1+\ve')\delta/\beta$.  Let $\Pi$ be a  zero-communication public-coin protocol with input $X'Y'$,  public coin $R$, Alice's output $A\in\Z\cup\set{\bot}$, and Bob's output
  $B\in\Z\cup\set{\bot}$. Let  $X^1Y^1A^1B^1R^1 \defeq (X'Y'ABR | ~A = B \neq\bot)$. Let
  \begin{enumerate}

  \item[1.] $\pr{}{A = B \neq\bot}\geq 2^{-c}$ ; \qquad   2. $\norm{X^1Y^1-X'Y'}\leq\ve$.

  \item[3.] $\pr{}{(X^1, Y^1,  A^1) \in f }\geq1-\ve$.

  \end{enumerate}

\noindent Then $\lsrec{z,X'Y'}{(1+\ve')\delta/\beta,\delta}{f}<\frac{c}{\ve}\enspace .$
\end{lemma}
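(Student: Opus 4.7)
Proof plan. Fix an arbitrary relation $g\subseteq\X\times\Y\times\Z$ with $\pr{(x,y)\leftarrow X'Y'}{f(x,y)\neq g(x,y)}\leq\delta$; the plan is to exhibit, for this $g$, a combinatorial rectangle $R$ satisfying both $X'Y'(R) > 2^{-c/\ve}$ and $X'Y'(g^{-1}(z)\cap R)\geq (1-\ve'')X'Y'(R)$, where $\ve''\defeq(1+\ve')\delta/\beta$. Since $\Pi$ is zero-communication, fixing the public coin $R=r$ makes Alice's output $a(x,r)$ a function of $x$ alone and Bob's output $b(y,r)$ a function of $y$ alone; hence $R^r_z\defeq\set{x:a(x,r)=z}\times\set{y:b(y,r)=z}$ is a combinatorial rectangle. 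Write $t_r\defeq X'Y'(R^r_z)$ and $e_r(g)\defeq X'Y'(R^r_z\setminus g^{-1}(z))$. The candidate rectangles for the smooth rectangle bound are these $R^r_z$ as $r$ varies.

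Two bounds drive the argument. First, I would use condition~2 to transfer the $\beta$-mass of $\set{(x,y):f(x,y)=\set{z}}$ into the conditional distribution $X^1Y^1$ (losing at most $\ve$ in $\ell_1$), and then invoke condition~3 to observe that the protocol must output $z$ on almost all inputs where $\set{z}$ is the unique correct answer: this yields $\prob{A^1=z}\geq\beta-2\ve$, and hence $\sum_r R(r)t_r = \prob{A=B=z}\geq(\beta-2\ve)\prob{A=B\neq\bot}\geq(\beta-2\ve)2^{-c}$. Second, by condition~3 directly, $\prob{A=B=z \land z\notin f(X',Y')}\leq\ve\prob{A=B\neq\bot}$, and the $\delta$-closeness of $g$ to $f$ adds at most $\delta$ more, giving $\sum_r R(r)e_r(g)\leq\ve\prob{A=B\neq\bot}+\delta$.

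The key step is to find a single $r^*$ for which $R^{r^*}_z$ works. A natural approach is to pass to the weighted distribution $\overline{R}(r)\propto R(r)t_r$ on public coins, under which the expected relative error equals $\sum_r R(r)e_r(g)/\sum_r R(r)t_r$, bounded above by $(\ve\prob{A=B\neq\bot}+\delta)/[(\beta-2\ve)\prob{A=B\neq\bot}]$. The lemma's hypothesis $(\delta+2\ve)/(\beta-3\ve)<\ve''$ is precisely what is needed to push this expectation strictly below $\ve''$; a Markov-type argument then extracts an $r^*$ with $e_{r^*}(g)/t_{r^*}<\ve''$, satisfying the $z$-correctness condition for $g$. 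The min-entropy target $X'Y'(R^{r^*}_z)>2^{-c/\ve}$ is looser than $>2^{-c}$ (since $\ve<1$) and follows from the lower bound on $\sum_r R(r)t_r$ together with the biasing of $\overline{R}$ toward $r$'s with large $t_r$. With such $r^*$ in hand, $\lrec{z,X'Y'}{\ve''}{g}\leq-\log X'Y'(R^{r^*}_z)<c/\ve$; taking the maximum over admissible $g$ gives $\lsrec{z,X'Y'}{\ve'',\delta}{f}<c/\ve$.

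The main obstacle is controlling simultaneously the rectangle mass $t_{r^*}$ (needed for the min-entropy bound) and the relative error $e_{r^*}(g)/t_{r^*}$ (needed for the smoothing). A naive Markov inequality on $e_r(g)/t_r$ under $\overline{R}$ gives the error condition on half of the $\overline{R}$-mass but says nothing about $t_{r^*}$; a reverse Markov on $t_r$ gives the opposite. Reconciling these requires squeezing out all the slack in the hypothesis $(\delta+2\ve)/(\beta-3\ve)<\ve''$, whose carefully tuned constants absorb (i) the $\ve$-loss from condition~2, (ii) the $\ve$-loss from condition~3, and (iii) the $\delta$-loss from the smoothing freedom in $g$. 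The factor $1/\ve$ in the target bound $c/\ve$ (rather than the naive $c$) is precisely the extra room the proof needs to execute this joint selection.
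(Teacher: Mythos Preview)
Your overall strategy matches the paper's: fix an arbitrary admissible $g$, look at the rectangles $R^r_z$ indexed by the public coin, and use an averaging/Markov argument to extract one good $r$. But there is a genuine gap in your error estimate.

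You write $\sum_r R(r)e_r(g)\le \ve\,\Pr[A=B\neq\bot]+\delta$ and then divide by $\sum_r R(r)t_r\ge(\beta-2\ve)\Pr[A=B\neq\bot]$, claiming the resulting ratio is controlled by the hypothesis $(\delta+2\ve)/(\beta-3\ve)<\ve''$. It is not: your numerator contains an \emph{absolute} $\delta$, so the ratio is
\[
\frac{\ve}{\beta-2\ve}\;+\;\frac{\delta}{(\beta-2\ve)\,\Pr[A=B\neq\bot]}\,,
\]
and since $\Pr[A=B\neq\bot]$ may be as small as $2^{-c}$, the second term can be astronomically large. The hypothesis on $(\delta+2\ve)/(\beta-3\ve)$ says nothing about it. The repair is exactly what the paper does: push the $\delta$-closeness of $g$ to $f$ into the \emph{conditional} world via condition~2. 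Since $\|X^1Y^1-X'Y'\|\le\ve$, one gets $\Pr[f(X^1,Y^1)\neq g(X^1,Y^1)]\le\delta+\ve$, and hence
\[
\Pr\bigl[(X^1,Y^1,A^1)\notin g\bigr]\;\le\;\ve+(\delta+\ve)\;=\;2\ve+\delta,
\]
which, once multiplied back by $\Pr[A=B\neq\bot]$, scales correctly. After this fix your expected relative error under $\overline{R}$ is at most $(2\ve+\delta)/(\beta-2\ve)$, and the Markov/union argument goes through with the stated constants.

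A second, smaller gap: the mass control (``biasing of $\overline{R}$ toward large $t_r$'') is left as a gesture. The paper's clean device is to note that conditioning on an event of probability $\ge 2^{-c}$ gives $\relent{X^1Y^1R^1A^1B^1}{X'Y'RAB}\le c$, whence $\mathbb{E}_{(r,a)\leftarrow(R^1,A^1)}\bigl[\relent{(X^1Y^1)_{r,a}}{X'Y'}\bigr]\le c$; since $(X^1Y^1)_{r,z}$ is $X'Y'$ restricted to the rectangle $R^r_z$, this is exactly $\mathbb{E}[-\log t_{r,a}]\le c$. Markov then gives $-\log t_r\le c/\ve$ on all but an $\ve$-fraction of $(R^1,A^1)$, and intersecting with $\{A^1=z\}$ (probability $\ge\beta-2\ve$) leaves mass $\ge\beta-3\ve$ on which to average the $g$-error. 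This is where both the factor $c/\ve$ and the ``$-3\ve$'' in the hypothesis come from.
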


\begin{proof}
 Let $g\subseteq\X\times\Y\times\Z$, satisfy 
$$\pr{(x,y)\leftarrow X'Y'}{f(x,y)\neq g(x,y)}\leq\delta.$$ 
  It suffices to show that   $\lrec{z, X'Y'}{(1+\ve')\delta/\beta}{g}\leq \frac{c}{\ve}$. \newline
Since $\pr{}{A = B \neq\bot}\geq 2^{-c}$,
   \begin{align}
   c & \geq \rminent{X^1Y^1R^1A^1B^1}{X'Y'RAB}  \\
& \geq \relent{X^1Y^1R^1A^1B^1}{X'Y'RAB}  \nonumber \\
 & \geq \expec{r\leftarrow R^1, a \leftarrow A^1}{\relent{(X^1Y^1)_{r,a}}{X'Y'}} \quad \mbox{(from Fact~\ref{fact:relative entropy splitting})}. \label{eqn:zeroccsize}
   \end{align}
Since $\norm{X^1Y^1-X'Y'}\leq\ve$,
  \begin{align}
   \pr{xyr\leftarrow X^1Y^1R^1}{f(x,y) = \set{z}} & \geq \pr{xy\leftarrow X'Y'}{f(x,y) = \set{z}} - \ve  \nonumber \\
 & \geq \beta - \ve.  \label{eqn:zeroccess}
  \end{align}
Since  $\pr{}{(X^1, Y^1,  A^1) \in f }\geq1-\ve$, hence $\prob{A^1=B^1=z} \geq \beta - 2\ve$.
Since $$\pr{(x,y)\leftarrow X'Y'}{f(x,y)\neq g(x,y)}\leq\delta,$$ by item 2 of this lemma, we have
  \begin{eqnarray}
  &\pr{xyra\leftarrow X^1Y^1R^1A^1}{(x,y,a)\in g} \geq  \nonumber\\
  &\pr{xyra\leftarrow X^1Y^1R^1A^1}{(x,y,a)\in f} - \delta -\ve \geq 1-2\ve - \delta. \label{eqn:zeroccerr}
  \end{eqnarray}
  By standard application of Markov's inequality on equations~\eqref{eqn:zeroccsize}, \eqref{eqn:zeroccess}, \eqref{eqn:zeroccerr}, we get an $r_0$, such that
  \begin{eqnarray*}
\relent{(X^1Y^1)_{r_0,z}}{X'Y'}&\leq&\frac{c}{\ve} , \label{eqn:zeroccrrrrsize}\\
\pr{xy\leftarrow (X^1Y^1)_{r_0,z}}{g(x,y) \neq \{z\}}&\leq& (\delta+2\ve)/(\beta-3\ve) \\
&\leq& (1+\ve')\delta/\beta .\label{eqn:zeroccrrrress}
  \end{eqnarray*}
 Here, $(X^1Y^1)_{r_0,z} = (X^1Y^1|(R^1=r_0, A^1=z)$. Note that the distribution of $(X^1Y^1)_{r_0,z}$ is  the distribution of $X'Y'$ restricted to some rectangle and then rescaled to make a distribution. Hence $$\relent{(X^1Y^1)_{r_0,z}}{X'Y'}=\rminent{(X^1Y^1)_{r_0,z}}{X'Y'}.$$
Thus
  $\lrec{z, X'Y'}{(1+\ve')\delta/\beta}{g}< \frac{c}{\ve}$.
\end{proof}

The following is our main lemma. A key tool that we use here is a sampling protocol that appears in~\cite{Kerenidis2012}  (protocol $\Pi'$ as
shown in Figure \ref{fig:pro}), which is a variant of a sampling protocol that appears in~\cite{BravermanWeinstein2011}, which in turn is a variant
of a sampling protocol that appears in~\cite{Braverman2012}.  Naturally similar arguments and calculations, as in this lemma, are made in previous
works~\cite{Braverman2012, BravermanWeinstein2011,Kerenidis2012}, however with a key difference. In their setting $\sum_{m}u_x(m)u_y(m) = 1$ for
all $(x,y)$. However in our setting this number could be much smaller than one for different $(x,y)$. Hence our arguments and
calculations deviate from  previous works at several places significantly.

\begin{lemma}(\textbf{Main Lemma}) \label{lem:mainlemma} Let $c\geq 1$. Let $p$ be a distribution over $\X\times\Y$ and $z \in \Z$.  Let $\beta\defeq\pr{(x,y)\leftarrow p}{f(x,y)=\set{z}}$. Let $0<\ve<1/3$ and $\delta, \ve' >0$ be such that $\frac{\delta+22\ve}{\beta-33\ve} < (1+\ve')\frac{\delta}{\beta}$. Let $XYM$ be random variables jointly distributed over the set $\X\times\Y\times\M$ such that the last $\lceil\log|\Z|\rceil$ bits of $M$ represents an element in $\Z$. Let $u_x:\M\rightarrow[0,1]$, 
$u_y:\M\rightarrow[0,1]$ be functions for all $(x,y)\in\X\times\Y$. If it holds that,
\begin{enumerate}
   \item For all $(x,y,m)\in\X\times\Y\times\M$, 
$$\pr{}{XYM=xym}=\frac{1}{q}p(x,y)u_x(m)u_y(m),$$  
where $q\defeq\sum_{xym}p(x,y)u_x(m)u_y(m)$;
   \item $\relent{XY}{p}\leq\ve^2/4$;
   \item $\condmutinf{X}{M}{Y}+\condmutinf{Y}{M}{X}\leq c$;
   \item $\err{f}{XYM}\leq\ve$, where $$\err{f}{XYM}\defeq\pr{xym\leftarrow XYM}{(x, y, \tilde{m})\notin f},$$
   and  $\tilde{m}$ represents the last $\lceil\log|\Z|\rceil$ bits of $m$;
\end{enumerate}
\noindent then 
$$\lsrec{z,p}{(1+\ve')\delta/\beta,\delta}{f} < \frac{2c}{11\ve^3}. \qed$$
\end{lemma}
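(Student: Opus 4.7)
The plan is to reduce, via Lemma~\ref{lem:zeroprotocolimpliesrec}, to the construction of a zero-communication public-coin protocol $\Pi'$ on inputs $(X',Y')\sim p$. Applying Lemma~\ref{lem:zeroprotocolimpliesrec} with its error parameter replaced by $11\ve$ converts its conclusion into $\lsrec{z,p}{(1+\ve')\delta/\beta,\delta}{f} < c'/(11\ve)$, so it suffices to exhibit a $\Pi'$ with non-abort probability at least $2^{-c'}$ for $c'=2c/\ve^2$, and with conditioned-on-non-abort random variables $X^1Y^1A^1$ satisfying $\onenorm{X^1Y^1-p}\leq 11\ve$ and $\pr{}{(X^1,Y^1,A^1)\in f}\geq 1-11\ve$.

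For $\Pi'$ I use the correlated-sampling protocol of Kerenidis~et~al.~\cite{Kerenidis2012}, a variant of Braverman's scheme~\cite{Braverman2012}. The public coin encodes a long sequence of independent triples $(m_i,\alpha_i,\beta_i)$, with $m_i$ distributed according to the $M$-marginal of the given $XYM$ (which depends only on the hypothesis data, not on the inputs) and $\alpha_i,\beta_i$ uniform on $[0,1]$. Alice, on input $x$, scans for the first index $i_A$ with $\alpha_{i_A}\leq u_x(m_{i_A})/(K\cdot M(m_{i_A}))$; Bob does the symmetric thing with $u_y$ and $\beta$, producing $i_B$. Both output the last $\lceil\log|\Z|\rceil$ bits of the message at their own winning index, or $\bot$ if the list is exhausted; non-abort corresponds to $i_A=i_B$. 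The threshold $K$ and list length will both be taken of size $2^{\Theta(c/\ve^2)}$.

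The analysis hinges on the fact that, conditioned on a single triple being accepted by both parties on inputs $(x,y)$, the induced distribution on $m$ is proportional to $u_x(m)u_y(m)/M(m)$, so the joint non-abort distribution on $(x,y,m)$ is proportional to $p(x,y)u_x(m)u_y(m)/M(m)$, while the target $\pr{}{XYM=xym}$ is proportional to $p(x,y)u_x(m)u_y(m)$. The only deviation is the factor $1/M(m)$, which is bounded by $K$ outside an $O(\ve)$-measure bad set: this follows from the classical substate theorem (Fact~\ref{fact:markovofrelent}) applied to $\relent{M_x}{M}$ and $\relent{M_y}{M}$, whose expectations together equal $\condmutinf{X}{M}{Y}+\condmutinf{Y}{M}{X}\leq c$ by hypothesis~3. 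Hence the truncation is effectively lossless, yielding simultaneously the target non-abort probability $\geq 2^{-2c/\ve^2}$ and $O(\ve)$-closeness in $\ell_1$ of the conditioned-on-non-abort joint distribution to $XYM$.

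The main obstacle, and the original technical contribution, is that in the present setting the normalizer $\sum_m u_x(m)u_y(m)$ varies with $(x,y)$, whereas in the earlier works~\cite{Braverman2012,BravermanWeinstein2011,Kerenidis2012} it is identically $1$ by Lemma~\ref{lem:privateccdistribution}. Inputs $(x,y)$ with small $\sum_m u_x(m)u_y(m)$ have both a small per-input non-abort probability and a small ratio $\pr{}{XY=xy}/p(x,y)$, and these two reductions match up — so while the conditioned-on-non-abort marginal $X^1Y^1$ is not close to $XY$ pointwise, it is close to $p$ once mass is reallocated consistently. I will verify this using $\onenorm{XY-p}\leq\ve/2$ (from Fact~\ref{fact:one norm and rel ent} and hypothesis~2) combined with Lemma~\ref{lem:ratiovs1} to promote $\ell_1$-closeness to pointwise ratio control, and combine this with $\err{f}{XYM}\leq\ve$ from hypothesis~4 to convert error on $XYM$ into error on $(X^1,Y^1,A^1)$. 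Tallying the constantly-many $O(\ve)$ losses from the substate truncation, the list-truncation, the $\relent{XY}{p}$ slack, and hypothesis~4, and tuning constants, produces the required $11\ve$ bounds; Lemma~\ref{lem:zeroprotocolimpliesrec} then yields the claim.
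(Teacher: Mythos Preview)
Your high-level plan --- build a zero-communication sampling protocol and apply Lemma~\ref{lem:zeroprotocolimpliesrec} --- is exactly right, and your discussion of why the variable normalizer $\sum_m u_x(m)u_y(m)$ is the new difficulty is on target. However, the protocol you describe is not the one from~\cite{Kerenidis2012}, and as stated it does not work. With Alice accepting on $\alpha_i\le u_x(m_i)/(K\,M(m_i))$ and Bob on $\beta_i\le u_y(m_i)/(K\,M(m_i))$, conditioning on both accepting the same index gives $m$ distributed proportionally to $u_x(m)u_y(m)/M(m)$, as you note. But ``$1/M(m)$ bounded by $K$'' is a truncation condition, not a closeness condition: you need this factor to be \emph{essentially constant} on most of the mass, and there is no reason it should be --- $M(m)$ can vary by a factor $2^{\Theta(c/\ve^2)}$ across $m$, so the sampled distribution can be $\Omega(1)$-far from $XYM$ in $\ell_1$. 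Separately, your information-theoretic identity is wrong: $\expec{x}{\relent{M_x}{M}}+\expec{y}{\relent{M_y}{M}}=\mutinf{X}{M}+\mutinf{Y}{M}$, not $\condmutinf{X}{M}{Y}+\condmutinf{Y}{M}{X}$; hypothesis~3 bounds the latter, which equals $\expec{x,y}{\relent{M_{xy}}{M_x}+\relent{M_{xy}}{M_y}}$.

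The actual protocol (see the paper's Figure~\ref{fig:pro}) fixes both problems simultaneously by a different acceptance rule: $m_i$ is sampled \emph{uniformly}, and each party tests \emph{both} darts, one against a tight threshold and one against a relaxed threshold. Alice accepts if $\alpha_i\le u_x(m_i)$ and $\beta_i\le 2^{\Delta}v_x(m_i)$, where $v_x(m)\defeq\expec{y\sim p(\cdot|x)}{u_y(m)}$; Bob accepts if $\alpha_i\le 2^{\Delta}v_y(m_i)$ and $\beta_i\le u_y(m_i)$. Then the joint-acceptance density is $\min\{u_x,2^{\Delta}v_y\}\cdot\min\{u_y,2^{\Delta}v_x\}$, which equals $u_x(m)u_y(m)$ exactly on the set where $u_x/v_y\le 2^{\Delta}$ and $u_y/v_x\le 2^{\Delta}$ --- and it is precisely these ratios that the substate theorem controls via $\relent{M_{xy}}{M_x}$ and $\relent{M_{xy}}{M_y}$ (see equations~\eqref{eqn:mxy}--\eqref{eqn:my}). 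A hashing step is also needed to synchronize Alice's first accepted index with an element of Bob's (larger) accepted set. The remainder of your outline --- using Lemma~\ref{lem:ratiovs1} on $\alpha_{xy}/q$, propagating errors through Claim~\ref{claim:distclose}-type bookkeeping, and closing with Lemma~\ref{lem:zeroprotocolimpliesrec} at parameter $11\ve$ --- is correct.
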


\begin{proof}
Note by direct calculations,
\begin{eqnarray}
&&\pr{}{XY=xy}=\frac{1}{q}p(x,y)\alpha_{xy}, \label{eqn:xy}\\
&& \hspace{1in} \text{where $\alpha_{xy}\defeq\sum_{m}u_x(m)u_y(m)$};\nonumber\\
&&\pr{}{X=x}=\frac{1}{q}p(x)\alpha_x, \text{where $\alpha_x\defeq\sum_yp(y|x)\alpha_{xy}$} \label{eqn:x}\\
&&\pr{}{Y=y}=\frac{1}{q}p(y)\alpha_y, \text{where $\alpha_y\defeq\sum_xp(x|y)\alpha_{xy}$} \label{eqn:y}\\
&&\pr{}{X_y=x}=\frac{p(x|y)\alpha_{xy}}{\alpha_x},  \pr{}{Y_x=y}=\frac{p(y|x)\alpha_{xy}}{\alpha_y}\label{eqn:y_x}\\
&&\pr{}{M_{xy}=m}=u_x(m)u_y(m)/\alpha_{xy}; \label{eqn:mxy} \\
&&\pr{}{M_x=m}=\frac{u_x(m)v_x(m)}{\alpha_x} \label{eqn:mx}\\
&& \hspace{1in}  \mbox{where $v_x(m)\defeq\sum_yp(y|x)u_y(m)$};\nonumber \\
&&\pr{}{M_y=m}=\frac{u_y(m)v_y(m)}{\alpha_y}, \label{eqn:my}
\end{eqnarray}
\begin{eqnarray}
&& \hspace{1in}  \mbox{where $v_y(m)\defeq\sum_xp(x|y)u_x(m)$}.\nonumber
\end{eqnarray}

\begin{figure}[!ht]

\noindent\hrulefill

\noindent {Alice's input is $x$. Bob's input is $y$. Common input is $c, \ve, q, \M$.}

\medskip

\begin{enumerate}


\item Alice and Bob both set $\Delta\defeq\frac{c/\ve+1}{\ve}+2, T\defeq\frac{2}{q}|\M|2^{\Delta}\ln\frac{1}{\ve} \mbox{ and }  k\defeq\log(\frac{3}{\ve}(\ln\frac{1}{\ve}))$.

\item For $i=1,\cdots, T$ :

\begin{enumerate}

\item Alice and Bob, using public coins, jointly sample $\mathbf{m_i}\leftarrow\M, \mbox{\boldmath$\alpha_i$},\mbox{\boldmath$\beta_i$}\leftarrow[0,2^{\Delta}]$, uniformly.

\item Alice accepts $\mathbf{m_i}$ if $\mbox{\boldmath$\alpha_i$}\leq u_x(\mathbf{m_i}),$ and $\mbox{\boldmath$\beta_i$}\leq2^{\Delta}v_x(\mathbf{m_i})$.

\item Bob accepts $\mathbf{m_i}$ if $\mbox{\boldmath$\alpha_i$}\leq2^{\Delta}v_y(\mathbf{m_i}),$ and $\mbox{\boldmath$\beta_i$}\leq u_y(\mathbf{m_i})$.


\end{enumerate}

\item Let $\A\defeq\set{i\in[T]: \text{Alice accepts $\mathbf{m_i}$}}$ and $\B\defeq\set{i\in[T]: \text{Bob accepts $\mathbf{m_i}$}}$.

\item Alice and Bob, using public coins, choose a uniformly random function $\mathbf{h}:\M\rightarrow\{0,1\}^k$ and a uniformly random string $\mathbf{r}\in\set{0,1}^k$.

\begin{enumerate}

  \item Alice  outputs $\bot$ if either $\A$ is empty or $\mathbf{h}(\mathbf{m_i}) \neq \mathbf{r}$ (where $i$ is the smallest element in non-empty $\A$).  Otherwise,
  she outputs the element in $\Z$, represented by the last $\lceil\log|\Z|\rceil$ bits of $\mathbf{m_i}$.

  \item Bob finds the smallest $j\in\B$ such that $\mathbf{h}(\mathbf{m_j})=\mathbf{r}$. If no such $j$ exists, he outputs $\bot$.
  Otherwise, he outputs the element in $\Z$, represented by the last $\lceil\log|\Z|\rceil$ bits of $\mathbf{m_j}$.

\end{enumerate}

\end{enumerate}

\noindent\hrulefill

\caption{Protocol $\Pi'$ } \label{fig:pro}

\end{figure}

Define
\begin{eqnarray}
&&G_1\defeq\{(x,y):\abs{1-\frac{\alpha_{xy}}{q}}\leq \frac{1}{2}~\text{and}~\abs{1-\frac{\alpha_{x}}{q}}\leq \frac{1}{2}~\text{and}~\nonumber\\
&&\quad \quad\quad \abs{1-\frac{\alpha_{xy}}{q}}\leq \frac{1}{2}\};\label{eqn:g1}\\
&&G_2\defeq\set{(x,y):\relent{M_{xy}}{M_x}+\relent{M_{xy}}{M_y}\leq c/\ve};\label{eqn:g2}\\
&&G\defeq\{(x,y):\pr{m\leftarrow M_{xy}}{\frac{u_y(m)}{v_x(m)}\leq2^{\Delta}~\text{and}~ \frac{u_x(m)}{v_y(m)}\leq2^{\Delta}}\nonumber\\
&&\quad\quad\quad\geq1-2\ve\}.\label{eqn:g}
\end{eqnarray}
We begin by showing that $G_1 \cap G_2$ is a large set and also $G_1 \cap G_2 \subseteq G$.
\begin{claim}\label{claim:probofg}
\begin{enumerate}
\item $\pr{(x,y)\leftarrow p}{(x,y)\in G_1}>1-6\ve$,
\item $\pr{(x,y)\leftarrow p}{(x,y)\in G_2}\geq1-3\ve/2$,
\item $\pr{(x,y)\leftarrow p}{(x,y)\in G_1 \cap G_2}\geq1-15\ve/2,$
\item $G_1 \cap G_2 \subseteq G$.
\end{enumerate}
\end{claim}

\begin{proof}

Note item 1. and item 2. imply item 3. Now we show 1. Note that (using item 2. of Lemma~\ref{lem:mainlemma} and Fact~\ref{fact:one norm and rel ent}) $\onenorm{XY-p}\leq\ve/2$. From Lemma \ref{lem:ratiovs1} and (\ref{eqn:xy}), we have
\[\pr{(x,y)\leftarrow p}{\abs{1-\frac{\alpha_{xy}}{q}}\leq1/2}\geq1-2\ve.\]
By the monotonicity of $\ell_1$-norm, we have $\onenorm{X-p_{\X}}\leq\frac{\ve}{2}$ and $\onenorm{X-p_{\Y}}\leq\frac{\ve}{2}$. Similarly, from (\ref{eqn:x}) and (\ref{eqn:y}) we have
\begin{eqnarray*}
&&\pr{(x,y)\leftarrow p}{\abs{1-\frac{\alpha_x}{q}}\leq1/2}\geq1-2\ve, \quad \text{and} \\
&&\pr{(x,y)\leftarrow p}{\abs{1-\frac{\alpha_y}{q}}\leq1/2}\geq1-2\ve.
\end{eqnarray*}
By the union bound, item 1. follows.

Next we show 2. From item 3. of Lemma \ref{lem:mainlemma},
\begin{eqnarray*}
&&\expec{(x,y)\leftarrow XY}{\relent{M_{xy}}{M_{x}}+\relent{M_{xy}}{M_{y}}}\\
&& = \condmutinf{X}{M}{Y}+\condmutinf{Y}{M}{X} \leq c.
\end{eqnarray*}
Markov's inequality implies $\pr{(x,y)\leftarrow XY}{(x,y)\in G_2}\geq1-\ve.$ Then item 2. follows from the fact that $XY$ and $p$ are $\ve/2$-close.

Finally we show 4. For any $(x,y)\in G_1\cap G_2$,
\begin{align*}
& \relent{M_{xy}}{M_x}  \leq c/\ve \\
& \Rightarrow\pr{m\leftarrow M_{xy}}{\frac{\pr{}{M_{xy}=m}}{\pr{}{M_x=m}}\leq2^{\frac{c/\ve+1}{\ve}}}\geq1-\ve \quad \mbox{(from Fact \ref{fact:markovofrelent})}\\
& \Rightarrow \pr{m\leftarrow M_{xy}}{\frac{u_y(m)\alpha_x}{v_x(m)\alpha_{xy}}\leq2^{\frac{c/\ve+1}{\ve}}}\geq1-\ve \quad \mbox{(from~\eqref{eqn:mxy} and~\eqref{eqn:mx})}\\
& \Rightarrow\pr{m\leftarrow M_{xy}}{\frac{u_y(m)}{v_x(m)}\leq2^{\Delta}}\geq1-\ve . \\
&  \hspace{1in} \mbox{($(x,y)\in G_1$ and the choice of $\Delta$)}
\end{align*}
Similarly, $\pr{m\leftarrow M_{xy}}{\frac{u_x(m)}{v_y(m)}\leq2^{\Delta}}\geq1-\ve.$
By the union bound,
\[\pr{m\leftarrow M_{xy}}{\frac{u_y(m)}{v_x(m)}\leq2^{\Delta}~\text{and}~ \frac{u_x(m)}{v_y(m)}\leq2^{\Delta}}\geq1-2\ve,\]
which implies $(x,y)\in G$. Hence $G_1\cap G_2\subseteq G$.
\end{proof}

Following few claims establish the desired properties of protocol $\Pi'$ (Figure \ref{fig:pro}).

\begin{definition}
Define the following events.
\begin{itemize}

  \item $E$ occurs if the smallest $i\in\A$ satisfies $\mathbf{h}(\mathbf{m_i})=\mathbf{r}$ and $i\in\B$. Note that $E$ implies $\A\neq\emptyset$.

  \item $B_c$ (subevent of $E$) occurs if $E$ occurs and there exist $j\in\B$ such that $\mathbf{h}(\mathbf{m_j})=\mathbf{r}$ and $\mathbf{m_i} \neq \mathbf{m_j}$, where $i$ is the smallest element in $\A$.

  \item $H \defeq E - B_c $.

\end{itemize}
\end{definition}

Below we use conditioning on $(x,y)$ as shorthand for ``Alice's input is $x$ and Bob's input is $y$".
\begin{claim}\label{claim:1}
For any $(x,y)\in G_1\cap G_2$, we have

\begin{enumerate}

\item for all $i\in[T]$,
$$\frac{1}{2}\cdot\frac{q}{|\M|2^{\Delta}}\leq\pr{\mathbf{r}_{\Pi'}}{\text{Alice accepts $\mathbf{m_i}$}|~(x,y)}\leq\frac{3}{2}\cdot\frac{q}{|\M|2^{\Delta}},$$
and
$$\frac{1}{2}\cdot\frac{q}{|\M|2^{\Delta}}\leq\pr{\mathbf{r}_{\Pi'}}{\text{Bob accepts $\mathbf{m_i}$}|~(x,y)}\leq\frac{3}{2}\cdot\frac{q}{|\M|2^{\Delta}},$$ where
$\mathbf{r}_{\Pi'}$ is the internal randomness of protocol $\Pi'$;

\item $\pr{\mathbf{r}_{\Pi'}}{B_c| ~ (x,y), E}\leq\ve;$

\item $\pr{\mathbf{r}_{\Pi'}}{H|~(x,y)}\geq(1-4\ve) \cdot 2^{-k-\Delta-2}.$

\end{enumerate}

\end{claim}

\begin{proof}

1. We do the argument for Alice. Similar argument follows for Bob. Note that $u_x(m), v_x(m)\in[0,1]$. Then for all  $(x,y) \in \X \times \Y$,
\begin{align*}
& \pr{\mathbf{r}_{\Pi'}}{ \text{Alice accepts $\mathbf{m_i}$}|~ (x,y)} =\frac{1}{|\M|}\sum_m\frac{u_x(m)v_x(m)}{2^{\Delta}}=\frac{\alpha_x}{|\M|2^{\Delta}}.
\end{align*}
Item 1 follows by the fact that $(x,y)\in G_1$.

2. Define $E_i$ (subevent of $E$) when $i$ is the smallest element of $\A$.  For all  $(x,y) \in G_1\cap G_2$, we have :
\begin{eqnarray*}
&&\pr{\mathbf{r}_{\Pi'}}{B_c | ~ (x,y), E_i}\\
&&=\pr{\mathbf{r}_{\Pi'}}{\exists j\ : ~j\in\B ~\text{and}~ \mathbf{h}(\mathbf{m_j})=\mathbf{r} ~\text{and}~ \mathbf{m_j} \neq \mathbf{m_i} | ~ (x,y), E_i}\\
&&\leq\sum_{j\in[T], j \neq i }\pr{\mathbf{r}_{\Pi'}}{j\in\B ~\text{and}~ \mathbf{h}(\mathbf{m_j})=\mathbf{r} ~\text{and}~ \mathbf{m_j} \neq \mathbf{m_i} | ~ (x,y), E_i}\\
&& \hspace{4.6cm} \mbox{(from the union bound)}\\
&&\leq\sum_{j\in[T], j \neq i }\pr{\mathbf{r}_{\Pi'}}{j\in\B|~ (x,y) , E_i}\\
&& \cdot \pr{\mathbf{r}_{\Pi'}}{\mathbf{h}(\mathbf{m_j})=\mathbf{r}  | ~ (x,y), E_i, j \in \B, \mathbf{m_j} \neq \mathbf{m_i}} \\
&&\leq T \cdot \frac{3q}{ |\M|2^{\Delta+1}}\cdot \frac{1}{2^{k}} \\
&&\hspace{0.1cm}\mbox{(two-wise independence of $\mathbf{h}$ and item 1. of this Claim)}\\
&&\leq \ve. \quad \mbox{(from choice of parameters)}
\end{eqnarray*}
Since above holds for every $i$, it implies $\pr{\mathbf{r}_{\Pi'}}{B_c | ~ (x,y), E} \leq \ve$.

3. Consider,
\begin{align*}
&\pr{\mathbf{r}_{\Pi'}}{E|~(x,y)}=\pr{\mathbf{r}_{\Pi'}}{\A\neq\varnothing|~(x,y)}\cdot\pr{\mathbf{r}_{\Pi'}}{E|~\A\neq\varnothing,(x,y)}  \\
& \geq \(1 - \(1-\frac{1}{2}\cdot\frac{q}{|\M|2^{\Delta}}\)^T\) \cdot\pr{\mathbf{r}_{\Pi'}}{E|~\A\neq\varnothing,(x,y)}  \\
& \hspace{1.6in}   \mbox{(using item 1. of this Claim)} \\
& \geq (1-\ve)  \cdot\pr{\mathbf{r}_{\Pi'}}{E|~\A\neq\varnothing,(x,y)}  ~\mbox{(from choice of parameters)}\\
&= (1-\ve) \cdot \pr{\mathbf{r}_{\Pi'}}{\mathbf{h}(\mathbf{m_i}) =\mathbf{r}|~\A\neq\varnothing,(x,y)}\cdot\\
&\quad \pr{\mathbf{r}_{\Pi'}}{i\in \B|~i\in\A,~ \mathbf{h}(\mathbf{m_i})=\mathbf{r}, (x,y)} \\
& \mbox{(from here on we condition on $i$ being the first element of $\A$)}\\
&=(1-\ve) \cdot 2^{-k} \cdot \pr{\mathbf{r}_{\Pi'}}{i\in\B|~i\in\A,(x,y)}\\
&=(1-\ve) \cdot 2^{-k} \cdot \frac{\pr{\mathbf{r}_{\Pi'}}{i\in\B ~\text{and}~ i\in\A|~(x,y)}}{\pr{\mathbf{r}_{\Pi'}}{i\in\A|~(x,y)}}\\
&\geq \frac{2}{3q}(1-\ve) \cdot 2^{-k} \cdot |\M|2^{\Delta} \cdot \pr{\mathbf{r}_{\Pi'}}{i\in\B ~\text{and}~ i\in\A|~(x,y)}    \\
& \hspace{1.6in}  \mbox{(using item 1. of this claim)}\\
&=\frac{2}{3q}(1-\ve) \cdot 2^{-k} \cdot |\M|2^{\Delta} \cdot\\
&\sum_{m\in\M}\frac{1}{|\M|2^{2\Delta}}\min\set{u_x(m),2^{\Delta}v_y(m)}\cdot\min\set{u_y(m),2^{\Delta}v_x(m)} \\
& \hspace{1.2in} \mbox{(from construction of protocol $\Pi'$)}\\
&\geq\frac{2}{3q}(1-\ve) \cdot 2^{-k} \cdot |\M|2^{\Delta} \cdot \sum_{m\in G_{xy}}\frac{u_x(m)u_y(m)}{|\M|2^{2\Delta}}\\
& \hspace{0.1in} \mbox{($G_{xy}\defeq\{m:u_x(m)\leq2^{\Delta}v_y(m)~\text{and}~ u_y(m)\leq2^{\Delta}v_x(m)\}$)}\\
&=\frac{2}{3q}(1-\ve) \cdot 2^{-k} \cdot |\M|2^{\Delta} \cdot  \frac{\alpha_{xy}}{|\M|2^{2\Delta}}\sum_{m\in G_{xy}}\frac{u_x(m)u_y(m)}{\alpha_{xy}}\\
&\geq\frac{1}{3}(1-\ve) \cdot 2^{-k-\Delta}\cdot \pr{m\leftarrow M_{xy}}{m\in G_{xy}} \\
& \hspace{1.2in} \mbox{(since $(x,y) \in G_1$ and (\ref{eqn:mxy}))}
\end{align*}
\begin{align*}
&\geq\frac{1}{3}(1-\ve) \cdot 2^{-k-\Delta} \cdot (1-2\ve)  \\
& \hspace{1.2in} \mbox{(since $(x,y) \in G$, using item 4. of Claim \ref{claim:probofg})}\\
&\geq(1-3\ve)\cdot 2^{-k-\Delta-2} .
\end{align*}
Finally, using item 2. of this Claim.
\begin{eqnarray*}
&\pr{\mathbf{r}_{\Pi'}}{H|~(x,y)}=\pr{\mathbf{r}_{\Pi'}}{E|~(x,y)}(1- \pr{\mathbf{r}_{\Pi'}}{B_c |~(x,y),E})\\
&\geq(1-4\ve) \cdot 2^{-k-\Delta-2}.
\end{eqnarray*}
\end{proof}

\begin{claim}\label{claim:probnonabort}
$\pr{p,\mathbf{r}_{\Pi'}}{H}\geq(1-\frac{23}{2}\ve) \cdot 2^{-k-\Delta-2}.$
\end{claim}

\begin{proof}

\begin{eqnarray*}
&&\pr{p,\mathbf{r}_{\Pi'}}{H}\geq\sum_{(x,y)\in G_1\cap G_2}p(x,y)\pr{\mathbf{r}_{\Pi'}}{H|~(x,y)}\\
&&\geq(1-4\ve) \cdot 2^{-k-\Delta-2}\sum_{(x,y)\in G_1\cap G_2}p(x,y)\\
&&\geq(1-\frac{23}{2}\ve) \cdot 2^{-k-\Delta-2}.
\end{eqnarray*}
The second inequality is by Claim \ref{claim:1}, item 3, and the last inequality is by Claim \ref{claim:probofg} item 3.
\end{proof}

The following claim is an important original contribution of this work (not present in the previous works~\cite{Kerenidis2012,Braverman2012,BravermanWeinstein2011}.)
The claim helps us establish a crucial property of $\Pi'$. The property is that the bad inputs $(x,y)$
for which the distribution of $\Pi'$'s sample for $M$, conditioned on non-abort, deviates a lot from the desired,
their probability is nicely reduced in the final distribution of $\Pi'$, conditioned on non-abort.  This helps us
to argue that the joint distribution of inputs and the transcript in $\Pi'$, conditioned on non-abort, is still close in $\ell_1$ distance to $XYM$.

\begin{claim}\label{claim:distclose}
Let $AB$ and $A'B'$ be random variables over $\A_1\times\B_1$ and $h:\A_1\rightarrow [0,+\infty)$ be a function. Suppose
for any $a\in\A_1$, there exist functions $f_a,g_a:\B_1\rightarrow[0,+\infty)$, such that
\begin{enumerate}
  \item $\sum_{a,b} h(a)f_a(b)=1$, and $\pr{}{AB=ab}=h(a)f_a(b)$;

  \item $f_a(b)\geq g_a(b),$ for all $(a,b)\in\A_1\times\B_1$;

  \item $\pr{}{A'B'=ab}=h(a)g_a(b)/C,$ where $C=\sum_{a,b}h(a)g_a(b)$;

  \item $\pr{a\leftarrow A}{\pr{b\leftarrow B_a}{f_a(b)=g_a(b)}\geq1-\delta_1}\geq1-\delta_2$, for $\delta_1\in[0,1), \delta_2\in[0,1)$.

\end{enumerate}

Then $\onenorm{AB-A'B'}\leq\delta_1+\delta_2$.

\end{claim}


\begin{proof} Set $G\defeq\set{(a,b):f_a(b)=g_a(b)}$. By condition 4, $\pr{(a,b)\leftarrow AB}{(a,b)\in G}\geq1-\delta_1-\delta_2.$ Then
\begin{align}
C&=\sum_{a,b}h(a)g_a(b)\geq\sum_{a,b:(a,b)\in G}h(a)f_a(b)\nonumber\\
&=\pr{(a,b)\leftarrow AB}{(a,b)\in G}\geq1-\delta_1-\delta_2.\label{eq:Clarge}
\end{align}
We have
\begin{eqnarray*}
&&\onenorm{AB-A'B'}=\frac{1}{2}\sum_{a,b}|h(a)f_a(b)-\frac{1}{C}h(a)g_a(b)|\\
&\leq&\frac{1}{2}\sum_{a,b}\br{|h(a)f_a(b)-h(a)g_a(b)|+|h(a)g_a(b)-\frac{1}{C}h(a)g_a(b)|}\\
&\leq&\frac{1}{2}\br{\sum_{a,b}\(h(a)f_a(b)-h(a)g_a(b)\)+\frac{1-C}{C}\sum_{a,b}h(a)g_a(b)} \\
&&\hspace{1.5in} \mbox{(using item 2. of this claim)}\\
&\leq&\frac{1}{2}\br{\sum_{a,b:(a,b)\notin G}h(a)f_a(b)+1-C}\\
&=&\frac{1}{2}\br{\pr{(a,b)\leftarrow AB}{(a,b)\notin G}+1-C}\leq\delta_1+\delta_2 ~~ \mbox{(from~\eqref{eq:Clarge})}
\end{eqnarray*}
\end{proof}

%

\begin{claim}\label{claim:singlemessagecloseness}
Let the input of protocol $\Pi'$ be drawn according to $p$. Let $X^1Y^1M^1$  represent
the input and the transcript (the part of the public coins drawn from $\M$) conditioned on $H$. Then we have
$\onenorm{XYM-X^1Y^1M^1}\le10\ve$. Note that this implies that $\onenorm{X^1Y^1A^1B^1-XY\tilde{M}\tilde{M}}\leq 10\ve$,
where $\tilde{M}$ represents the last $\lceil\log|\Z|\rceil$ bits of $M$ and $A^1, B^1$ represent outputs of
Alice and Bob respectively, conditioned on $H$.
\end{claim}

\begin{proof}

For any $(x,y)$, define
$$w_{xy}(m)\defeq\min\set{u_x(m),2^{\Delta}v_y(m)}\cdot\min
\set{u_y(m),2^{\Delta}v_x(m)}.$$
From step 2 (a),(b),(c), of protocol $\Pi'$,
$\pr{}{M^1X^1Y^1=mxy}=\frac{1}{C}p(x,y)w_{xy}(m)$, where $C=\sum_{xym}p(x,y)w_{xy}(m)$.
Now,
\begin{eqnarray*}
&\pr{(x,y)\leftarrow XY}{\pr{m\leftarrow M_{xy}}{w_{xy}(m)=u_x(m)u_y(m)}\geq1-2\ve}\\
&=\pr{(x,y)\leftarrow XY}{(x,y)\in G}\geq1-8\ve.
\end{eqnarray*}
The last inequality above follows using items 3. and 4. of  Claim \ref{claim:probofg} and the fact that $XY$ and $p$ are $\ve/2$-close.

Finally using Claim \ref{claim:distclose} (by substituting $\delta_1 \leftarrow 2 \ve, \delta_2 \leftarrow 8 \ve, A \leftarrow XY, B \leftarrow M, A' \leftarrow X^1Y^1, B' \leftarrow M^1, h \leftarrow \frac{p}{q}, f_{(x,y)}(m) \leftarrow u_x(m) u_y(m)$
and $g_{(x,y)}(m) \leftarrow w_{xy}(m) $), we get that \newline $\onenorm{X^1Y^1M^1-XYM}\leq10\ve$.
\end{proof}

 We are now ready to finish the proof of Lemma~\ref{lem:mainlemma}. Consider the protocol $\Pi'$. We claim that it satisfies
Lemma \ref{lem:zeroprotocolimpliesrec} by taking the correspondence between quantities in Lemma \ref{lem:zeroprotocolimpliesrec}
and Lemma~\ref{lem:mainlemma} as follows : $c \leftarrow (c/\ve^2+3/\ve), \ve \leftarrow 11\ve, \beta \leftarrow \beta, \delta \leftarrow \delta, z \leftarrow z, X'Y' \leftarrow p$.

Item 1. of   Lemma \ref{lem:zeroprotocolimpliesrec} is implied by Claim \ref{claim:probnonabort}
since  $(1-\frac{23}{2}\ve) \cdot 2^{-k-\Delta-2}\geq2^{-(c/\ve^2+3/\ve)}$, from choice of parameters.

Item 2. of   Lemma \ref{lem:zeroprotocolimpliesrec} is implied since $\onenorm{X^1Y^1-p}\leq\onenorm{X^1Y^1-XY}+\onenorm{XY-p}\leq \frac{21}{2}\ve$,
using item 2. of Lemma \ref{lem:mainlemma}, Fact \ref{fact:one norm and rel ent}  and
Claim \ref{claim:singlemessagecloseness}.

Item 3. of   Lemma \ref{lem:zeroprotocolimpliesrec} is implied since
$\err{f}{X^1Y^1M^1}\leq\err{f}{XYM}+\onenorm{X^1Y^1M^1-XYM}\leq 11\ve$, using  item 4. in Lemma \ref{lem:mainlemma} and Claim \ref{claim:singlemessagecloseness}.

This implies
$$\lsrec{z,p}{(1+\ve')\delta/\beta,\delta}{f}<\frac{c/\ve^2 + 3/\ve}{11\ve} \leq \frac{2c}{11	\ve^3}. \qed$$
\end{proof}

We can now prove our main result. 
\begin{theorem}\label{thm:sdptsrec1}
Let $\X,\Y,\Z$ be finite sets, $f\subseteq\X\times\Y\times\Z$ be a relation, and $t>1$ be an integer. Let $\mu$ be a distribution on $\X\times\Y$. Let $z \in \Z$ and $\beta\defeq\pr{(x,y)\leftarrow \mu}{f(x,y)=\set{z}}$.  Let $0<\ve<1/3$ and $\ve',\delta>0$ be such that $\frac{\delta+22\ve}{\beta-33\ve} < (1+\ve')\frac{\delta}{\beta}$. It holds that,
\[\mathrm{R}^{\mathrm{pub}}_{1-(1-\ve)^{\lfloor\ve^2 t/32\rfloor}}(f^t)\geq\frac{\ve^2}{32} \cdot t \cdot \br{ 11 \ve \cdot \lsrec{z,\mu}{(1+\ve')\delta/\beta,\delta}{f}  - 2}.\]
\end{theorem}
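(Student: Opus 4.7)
The plan is to apply Yao's principle (Fact~\ref{fact:yaos principle}) and work in the distributional setting. Proceed by contradiction: suppose $\Pi$ is a deterministic protocol for $f^t$ with inputs drawn from $\mu^t$, communication $C$ less than the claimed lower bound, and distributional success probability at least $(1-\ve)^k$, where $k \defeq \lfloor \ve^2 t/32 \rfloor$. For each $i \in [t]$, let $T_i$ denote the event that $\Pi$ succeeds on coordinate $i$. I will build coordinates $j_1, \ldots, j_k$ greedily: having chosen $\C_{i-1} \defeq \{j_1, \ldots, j_{i-1}\}$, let $E_{i-1} \defeq \bigcap_{\ell < i} T_{j_\ell}$, and pick $j_i \notin \C_{i-1}$ so that $\prob{T_{j_i} \mid E_{i-1}} \leq 1 - \ve$. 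If such $j_i$ always exists whenever $\prob{E_{i-1}} \geq (1-\ve)^{i-1}$, then $\prob{E_k} \leq (1-\ve)^k$, contradicting the success assumption since success on all of $[t]$ implies $E_k$.

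The crux is exhibiting $j_i$. Assume for contradiction that every $j \notin \C_{i-1}$ has $\prob{T_j \mid E_{i-1}} > 1 - \ve$. For each such $j$, introduce the standard auxiliary variable $R_j$ that, conditioned on $R_j$, makes Alice's and Bob's inputs on the coordinates different from $j$ independent (see~\cite{Bar-Yossef2002a,Holenstein2007,Barak:2010:CIC:1806689.1806701,Jain:2011,Jain2012}). Let $X_j^1 Y_j^1 R_j^1 M^1 \defeq (X_j Y_j R_j M \mid E_{i-1})$, and aim to verify, on average over $j \notin \C_{i-1}$, the four hypotheses of Lemma~\ref{lem:mainlemma} with $p = \mu$ and the given $z, \ve, \ve', \delta, \beta$. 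Item~1 (the factored form $\frac{1}{q}\mu(x,y) u_x(r,m) u_y(r,m)$) follows from Lemma~\ref{lem:privateccdistribution} applied to the protocol in which $R_j$ and the other coordinates' inputs serve as private randomness, once it is checked that conditioning on $E_{i-1}$ preserves the form. Item~2 uses $\relent{X^1 Y^1}{\mu^t} = \log(1/\prob{E_{i-1}}) \leq k \log(1/(1-\ve))$; the chain rule (Fact~\ref{fact:relative entropy splitting}) together with Fact~\ref{fact:relative entropy joint convexity} yields $\sum_{j \notin \C_{i-1}} \relent{X_j^1 Y_j^1}{\mu} \leq k \log(1/(1-\ve))$, and averaging over the $t-k \geq 31t/32$ remaining coordinates easily dominates the desired $\ve^2/4$ bound. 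Item~3 bounds $\sum_{j \notin \C_{i-1}} \bigl( \condmutinf{X_j^1}{R_j^1 M^1}{Y_j^1} + \condmutinf{Y_j^1}{R_j^1 M^1}{X_j^1} \bigr)$ by $O(C)$ using an information chain-rule argument that exploits the conditional independence given $R_j$; the average is then $O(C/t)$. Item~4 is immediate from the contradiction assumption: $\err{f}{X_j^1 Y_j^1 M^1} < \ve$.

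Markov's inequality applied jointly to items~2 and~3 furnishes a coordinate $j^* \notin \C_{i-1}$ satisfying all four hypotheses with $c = O(C/t)$. Lemma~\ref{lem:mainlemma} then delivers $\lsrec{z,\mu}{(1+\ve')\delta/\beta,\delta}{f} < 2c/(11\ve^3) = O(C/(\ve^3 t))$, which after tracking the explicit constants contradicts the assumed upper bound on $C$; the ``$-2$'' in the theorem statement absorbs low-order corrections arising from the Markov constants and the $(t-k)/t$ conversion. The principal obstacle will be item~1: even with $R_j$ in hand, conditioning on the symmetric event $E_{i-1}$ a priori couples Alice's and Bob's information at the $\C_{i-1}$ coordinates, so restoring the factored form demands a careful arrangement of the randomness so that the conditioning modifies only the functions $u_x, u_y$. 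Establishing the $O(C)$ bound in item~3 is the second technical point, requiring a careful information-theoretic accounting that properly exploits the $R_j$ decomposition.
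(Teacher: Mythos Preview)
Your proposal is correct and follows essentially the same approach as the paper: reduce to the distributional setting via Yao's principle, greedily select coordinates, introduce the auxiliary variable $R_j$ to decouple the non-product distribution, verify the hypotheses of Lemma~\ref{lem:mainlemma} by information-theoretic averaging plus Markov, and derive the contradiction from the Main Lemma. Two minor slips to fix: the threshold for continuing the induction should be $\prob{E_{i-1}} > (1-\ve)^k$ (not $(1-\ve)^{i-1}$, which need not be maintained inductively and is what actually yields the bound $k\log\frac{1}{1-\ve}$ you use later), and $\relent{X^1Y^1}{\mu^t} \leq \log(1/\prob{E_{i-1}})$ rather than equality (it is the relative min-entropy that equals this quantity).
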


\begin{proof}

Set $\delta_1\defeq\ve^2/32$. 
define 
$$c\defeq 11 \ve \cdot  \lsrec{z,\mu}{(1+\ve')\delta/\beta,\delta}{f}  - 2$$
and $XY\sim\mu^k .$
By Fact~\ref{fact:yaos principle}, it suffices to show 
$$\mathrm{D}^{\mu^t}_{1-(1-\ve)^{\lfloor\ve^2 t/32\rfloor}}(f^t)\geq\delta_1tc.$$  Let $\Pi$ be a
deterministic two-way communication protocol, that computes $f^t$, with total communication $\delta_1 ct$ bits.
The following claim implies that the success of $\Pi$ is at most $(1-\ve)^{\lfloor\delta_1 t\rfloor}$, and this shows the desired.
\end{proof}

\begin{claim}
For each $i\in[t]$, define a binary random variable $T_i\in\set{0,1}$, which represents the success of $\Pi$ on the
$i$-th instance. That is, $T_i=1$ if the protocol computes the $i$-th instance of $f$ correctly, and $T_i=0$ otherwise.
Let $t'\defeq\lfloor\delta_1 t\rfloor$. There exists $t'$ coordinates $\set{i_1,\cdots,i_{t'}}$ such that for each $1\leq r\leq t'-1$,
\begin{enumerate}
\item either $\pr{}{T^{(r)}=1}\leq(1-\ve)^{t'}$  or
\item $\pr{}{T_{i_{r+1}}=1|~T^{(r)}=1}\leq1-\ve$, where $T^{(r)}\defeq\prod_{j=1}^rT_{i_j}$.
\end{enumerate}
\end{claim}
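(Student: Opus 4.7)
The plan is to build the sequence $(i_1, \ldots, i_{t'})$ greedily, starting with $i_1$ chosen arbitrarily. Suppose $\C_r = \set{i_1, \ldots, i_r}$ have been fixed with $r \geq 1$, and set $p_r \defeq \prob{T^{(r)} = 1}$. If $p_r \leq (1-\ve)^{t'}$, alternative~1 holds at step $r$ and I can pick the remaining indices arbitrarily from $[t] \setminus \C_r$. Otherwise I will exhibit $i_{r+1} \notin \C_r$ with $\prob{T_{i_{r+1}} = 1 \mid T^{(r)} = 1} \leq 1-\ve$, by contradiction: assuming $\prob{T_j = 1 \mid T^{(r)} = 1} > 1-\ve$ for \emph{every} $j \notin \C_r$, I will invoke Lemma~\ref{lem:mainlemma} to conclude $\lsrec{z,\mu}{(1+\ve')\delta/\beta,\delta}{f} < \lsrec{z,\mu}{(1+\ve')\delta/\beta,\delta}{f}$.

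For the contradiction step, I condition the joint distribution of the inputs $XY \sim \mu^t$, the public coins, and the transcript $M$ of $\Pi$ on the event $T^{(r)} = 1$, and write $X^1 Y^1 M^1$ for the result. For each candidate $j \notin \C_r$, take the auxiliary variable $R^1_j$ to consist of $(X^1_{-j}, Y^1_{-j})$ together with the public coins. Conditionally on $R^1_j$, the pair $(X^1_j, Y^1_j)$ has marginal $\mu$: coordinates are independent under $\mu^t$ and, once $R^1_j$ is fixed, the success indicator $\I[T^{(r)} = 1]$ depends only on $R^1_j$ and $M^1$. Applying Lemma~\ref{lem:privateccdistribution} to the resulting conditional distribution, with $R^1_j$ folded into the transcript and the success indicator absorbed into Alice's bilinear factor, yields item~1 of Lemma~\ref{lem:mainlemma} for $(X^1_j, Y^1_j, R^1_j M^1)$ against the reference distribution $p = \mu$.

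Items~2, 3 and~4 of Lemma~\ref{lem:mainlemma} are then verified by averaging over $j \notin \C_r$ and applying Markov's inequality. Since $p_r > (1-\ve)^{t'}$, the cost of the conditioning in relative entropy is at most $\log(1/p_r) \leq 2 t' \ve$, and the chain rule (Fact~\ref{fact:relative entropy splitting}) gives $\sum_{j \notin \C_r} \relent{X^1_j Y^1_j}{\mu} \leq 2 t' \ve$. A standard information-theoretic calculation bounds $\sum_{j \notin \C_r}\br{\condmutinf{X^1_j}{M^1}{Y^1_j R^1_j} + \condmutinf{Y^1_j}{M^1}{X^1_j R^1_j}}$ by $2D + O(t' \ve)$, where $D = \delta_1 c t$ is the communication of $\Pi$. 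Since $|[t] \setminus \C_r| \geq t(1 - \delta_1)$, the per-coordinate averages are $O(\ve \delta_1)$ and $O(\delta_1 c + \ve \delta_1)$ respectively; a double Markov then locates a single $j$ at which item~2 (tolerance $\ve^2/4$) and item~3 (bound $c' = O(\delta_1 c + \delta_1 \ve)$) hold simultaneously. Item~4 holds for this $j$ by the contradiction hypothesis.

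Lemma~\ref{lem:mainlemma} now yields $\lsrec{z,\mu}{(1+\ve')\delta/\beta,\delta}{f} < \frac{2c'}{11 \ve^3}$. The choice $\delta_1 = \ve^2/32$ is calibrated precisely so that $\frac{2c'}{11 \ve^3}$ falls strictly below the defining value $c = 11\ve \cdot \lsrec{z,\mu}{(1+\ve')\delta/\beta,\delta}{f} - 2$, delivering the desired contradiction. The main obstacle I anticipate is the careful bookkeeping around $R^1_j$: conditioning on $T^{(r)} = 1$ destroys the clean product structure across coordinates, and one must verify that the bilinear factorization of item~1 of Lemma~\ref{lem:mainlemma} survives this conditioning, while simultaneously distributing the conditioning cost $\log(1/p_r)$ between items~2 and~3 evenly enough that a single coordinate $j$ can meet all four hypotheses at once.
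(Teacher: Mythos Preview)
Your high-level architecture matches the paper: build the coordinates greedily, and at each step either the cumulative success probability is already small, or one applies the Main Lemma (Lemma~\ref{lem:mainlemma}) on a well-chosen coordinate to force the conditional success below $1-\ve$. The gap is in your choice of the auxiliary variable $R_j$.

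You set $R^1_j = (X^1_{-j}, Y^1_{-j}, \text{coins})$ and then bound
\[
\sum_{j\notin\C_r}\Bigl(\condmutinf{X^1_j}{M^1}{Y^1_j R^1_j}+\condmutinf{Y^1_j}{M^1}{X^1_j R^1_j}\Bigr).
\]
But item~3 of Lemma~\ref{lem:mainlemma}, after the substitution $M\leftarrow R^1_jM^1$, requires a bound on $\condmutinf{X^1_j}{R^1_jM^1}{Y^1_j}+\condmutinf{Y^1_j}{R^1_jM^1}{X^1_j}$. The difference is the term $\condmutinf{X^1_j}{R^1_j}{Y^1_j}=\condmutinf{X^1_j}{X^1_{-j}Y^1_{-j}}{Y^1_j}$, and this term is \emph{not} controlled by the conditioning cost $\log(1/p_r)$. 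A clean counterexample: take $\mu$ uniform on $\{0,1\}^2$ and condition on $X_1\oplus\cdots\oplus X_t=0$, an event of probability $1/2$. Then each $X^1_j$ is determined by $X^1_{-j}$, so $\condmutinf{X^1_j}{X^1_{-j}Y^1_{-j}}{Y^1_j}=1$ for every $j$, and the sum is $t$, not $O(1)$. Your ``standard information-theoretic calculation'' does not exist for this choice of $R_j$. (Relatedly, your assertion that $(X^1_j,Y^1_j)$ has marginal $\mu$ conditionally on $R^1_j$ is false: the event $T^{(r)}=1$ depends on $R_j$ \emph{and} $M$, so after marginalizing $M$ the conditional law of $(X_j,Y_j)$ is $\mu(x_j,y_j)\cdot\Pr[T^{(r)}=1\mid x_j,y_j,r_j]$, not $\mu$.)

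The paper resolves this with the standard $D,U$ symmetrization: draw $D\in\{0,1\}^t$ uniformly, set $U_i=X_i$ if $D_i=0$ and $U_i=Y_i$ otherwise, and take $R_i=D_{-i}U_{-i}X_{\C\cup[i-1]}Y_{\C\cup[i-1]}$. Because $R_i$ carries only one side of each other coordinate's input (chosen at random), the chain-rule decomposition of $\relent{X^1Y^1D^1U^1}{XYDU}\le \delta_1 t$ yields, after averaging over $D_i$, the bound $\sum_{i\notin\C}\bigl(\condmutinf{X^1_i}{R^1_i}{Y^1_i}+\condmutinf{Y^1_i}{R^1_i}{X^1_i}\bigr)\le 2\delta_1 t$ (Claim~\ref{claim:goodcoordinate}). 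This is precisely the piece your $R_j$ cannot deliver; once you adopt the $D,U$ construction, the rest of your outline goes through essentially as in the paper.
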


\begin{proof}

Suppose we have already identified $r$ coordinates, $i_1,\cdots,i_r$ satisfying that $\pr{}{T_{i_1}}\leq1-\ve$ and $\pr{}{T_{i_{j+1}}=1|~T^{(j)}=1}\leq1-\ve$
for $1\leq j\leq r-1.$ If $\pr{}{T^{(r)}=1}\leq(1-\ve)^{t'}$, then we are done. So from now on we assume $\pr{}{T^{(r)}=1}>(1-\ve)^{t'}\geq2^{-\delta_1 t}$. Here we assume  $r\geq1$.
Similar arguments also work when $r=0$, that is  for identifying the first coordinate, which we skip for the sake of avoiding
repetition.

Let $D$ be a random variable uniformly distributed in $\set{0,1}^t$ and independent of $XY$. Let $U_i=X_i$ if $D_i=0$, and
$U_i=Y_i$ if $D_i=1$. For any random variable $L$, define $L^1\defeq(L|T^{(r)}=1).$ If $L=L_1\cdots L_t$, define $L_{-i}\defeq L_1\cdots L_{i-1}L_{i+1}\cdots L_t$.
Let $\C\defeq\set{i_1,\cdots, i_r}.$ Define $R_i\defeq D_{-i}U_{-i}X_{\C\cup[i-1]}Y_{\C\cup[i-1]}$.

Now let us  apply Lemma \ref{lem:mainlemma} by substituting $XY \leftarrow X^1_jY^1_j, M \leftarrow R_j^1M^1, p\leftarrow X_jY_j, z\leftarrow z, \ve \leftarrow \ve, \delta \leftarrow \delta, \beta \leftarrow \beta, \ve' \leftarrow \ve' $ and $c \leftarrow 16 \delta_1 (c+1)$. Condition 1. in Lemma \ref{lem:mainlemma} is  implied by Claim \ref{claim:distribution}. Conditions 2. and 3. are implied by Claim \ref{claim:goodcoordinate}. Also we have $\lsrec{z,\mu}{(1+\ve')\delta/\beta,\delta}{f}>\frac{32\delta_1 (c+1)}{11\ve^{3}}$, by our choice of $c$. Hence condition 4. must be false and hence $\err{f}{X^1_jY^1_jM^1}= \err{f}{X^1_jY^1_jR_j^1M^1}>\ve$. This shows condition 2. of this Claim.
\end{proof}

\begin{claim}\label{claim:distribution}
Let $\R$ denote the space of $R_j$. There exist functions $u_{x_j},u_{y_j}:\R\times\M\rightarrow[0,1]$ for all $(x_j,y_j)\in\X\times\Y$ and a real number $q>0$ such that
\[\pr{}{X^1_jY^1_jR^1_jM^1=x_jy_jr_jm}=\frac{1}{q}\mu(x_j,y_j)u_{x_j}(r_j,m)u_{y_j}(r_j,m).\]
\end{claim}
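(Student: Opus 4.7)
The plan is to directly expand the joint probability of $(X_j, Y_j, R_j, M)$ restricted to the conditioning event $T^{(r)} = 1$ and then reorganize it into the required product form. Since $\Pi$ is deterministic with input $XY \sim \mu^t$, Lemma~\ref{lem:privateccdistribution} yields
$\pr{}{XYM = xym} = \mu^t(x,y)\,\tilde{u}_x(m)\,\tilde{u}_y(m)$
for some $\tilde{u}_x,\tilde{u}_y : \M \to [0,1]$. Augmenting with the independent uniform $D \in \set{0,1}^t$ and the deterministic $U$ (with $U_i = X_i$ or $Y_i$ according to $D_i$) introduces an overall factor $2^{-t}$ and indicator constraints tying $u_i$ to $(x_i,y_i,d_i)$.

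Next I would parse $r_j = (d_{-j}, u_{-j}, x'_{\C \cup [j-1]}, y'_{\C \cup [j-1]})$ and classify the coordinates $i \neq j$. Set $S := \C \cup [j-1]$ and $A := [t]\setminus(S\cup\set{j})$, and let $A_0 := \set{i \in A : d_i = 0}$, $A_1 := \set{i \in A : d_i = 1}$. The value of $r_j$ pins both $x_i$ and $y_i$ for $i \in S$, pins $x_i = u_i$ for $i \in A_0$ (leaving $y_i$ free), and pins $y_i = u_i$ for $i \in A_1$ (leaving $x_i$ free). Using the product structure
$$\mu^t(x,y) = \mu(x_j,y_j)\prod_{i \in S}\mu(x'_i,y'_i)\prod_{i \in A_0}\mu(u_i,y_i)\prod_{i \in A_1}\mu(x_i,u_i),$$
and observing that the event $T^{(r)} = 1$ depends only on $(x_\C, y_\C, m)$, all of which are determined by $(r_j, m)$, one can sum out the free variables $(y_{A_0}, x_{A_1})$ to obtain
\begin{align*}
&\pr{}{X_j Y_j R_j M = x_j y_j r_j m,\; T^{(r)} = 1} \\
&\quad = 2^{-(t-1)} \prod_{i \in S}\mu(x'_i,y'_i)\cdot \I[T^{(r)}=1\mid r_j,m] \cdot \mu(x_j,y_j)\cdot F(x_j,r_j,m)\cdot G(y_j,r_j,m),
\end{align*}
where
\begin{align*}
F(x_j,r_j,m) &:= \sum_{x_{A_1}} \prod_{i \in A_1}\mu(x_i,u_i)\cdot \tilde{u}_x(m), \\
G(y_j,r_j,m) &:= \sum_{y_{A_0}} \prod_{i \in A_0}\mu(u_i,y_i)\cdot \tilde{u}_y(m).
\end{align*}
Here $\tilde{u}_x(m)$ is regarded as a function of $(x_j, r_j, x_{A_1}, m)$ since every other coordinate of $x$ is already pinned by $(x_j, r_j)$, and analogously for $\tilde{u}_y(m)$.

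Finally I would set
$$u_{x_j}(r_j, m) := 2^{-(t-1)} \prod_{i \in S}\mu(x'_i,y'_i)\cdot \I[T^{(r)}=1 \mid r_j,m]\cdot F(x_j,r_j,m), \qquad u_{y_j}(r_j,m) := G(y_j,r_j,m),$$
both of which lie in $[0,1]$ since $\tilde{u}_x,\tilde{u}_y \in [0,1]$ and $\sum_{x_i}\mu(x_i,u_i) \leq 1$. Dividing by $q := \pr{}{T^{(r)}=1} > 0$ (which is guaranteed by the standing assumption $\pr{}{T^{(r)}=1} > (1-\ve)^{t'}$) yields the stated factorization. The main obstacle is purely bookkeeping: the crux is recognising that for $i \in A$, knowing $u_i$ already decouples the joint factor $\mu(x_i,y_i)$ into a single-variable factor that can be assigned to either Alice's or Bob's side depending on $d_i$, which is precisely the reason for incorporating $D$ and $U$ into the definition of $R_j$ in the first place.
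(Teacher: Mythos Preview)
Your argument is correct. The explicit expansion works, and your observation that the event $T^{(r)}=1$ is determined by $(r_j,m)$ (since $\C\subseteq \C\cup[j-1]$ and the outputs are a function of $m$) is exactly what is needed to absorb the conditioning into the $(r_j,m)$-factors. One small point you gloss over: for $i\in S$ the tuple $r_j$ carries $d_i,u_i,x'_i,y'_i$ simultaneously, so there is an additional consistency indicator ($u_i=x'_i$ when $d_i=0$, $u_i=y'_i$ when $d_i=1$) that must be folded into $u_{x_j}$; this is harmless since it is again a function of $r_j$ alone.

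The paper takes a different, more structural route. Rather than unfolding the sum by hand, it observes that $R_j$ is independent of $X_jY_j$ and that, conditioned on $R_j$, the remaining parts of $X_{-j}$ and $Y_{-j}$ can be sampled privately by Alice and Bob respectively. It then builds an auxiliary private-coin protocol $\Pi_1$ on input $(x_j,y_j)$: Alice first draws $R_j$ and sends it, the two parties complete $X_{-j},Y_{-j}$ locally, and then they run $\Pi$. A single invocation of Lemma~\ref{lem:privateccdistribution} on $\Pi_1$ gives the factorization of $X_jY_jR_jM$ immediately, and conditioning on $T^{(r)}=1$ just restricts $(r_j,m)$ to a subset $S$, which is handled by zeroing the $u$-functions outside $S$. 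Your direct computation is essentially the ``unrolled'' version of this argument: the separation into $A_0,A_1$ and the summing-out of $y_{A_0},x_{A_1}$ is precisely what Alice's and Bob's private sampling in $\Pi_1$ accomplishes. The paper's version is shorter and makes the conceptual reason for the factorization transparent (it is a two-party protocol transcript), while yours has the advantage of being fully explicit and not relying on the reader to verify that $\Pi_1$ reproduces the correct joint law.
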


\begin{proof}
Note that $X_jY_j$ is independent of $R_j$.  Now consider a private-coin two-way protocol $\Pi_1$ with input $X_jY_j$ as follows. Let Alice generate $R_j$ and  send to Bob. Alice and Bob then generate $(X_{-j})_{x_jr_j}$ and $(Y_{-j})_{y_jr_j}$, respectively. Then they run the protocol $\Pi$. Thus, from Lemma~\ref{lem:privateccdistribution},
\[\pr{}{X_jY_jR_jM=xy_jrm}=\mu(x_j,y_j) \cdot v_{x_j}(r_j, m) \cdot v_{y_j}(r_j, m),\]
where $v_{x_j}, v_{y_j}:\R \times \M\rightarrow[0,1]$, for all $(x_j,y_j)\in\X\times\Y$.

Note that conditioning on $T^{(r)}=1$ corresponds to choosing a subset, say S, of $\R \times \M$. Let
$$q \defeq \sum_{x_jy_jr_jm: (r_j,m) \in S}\mu(x_j,y_j)v_{x_j}(r_j,m)v_{y_j}(r_j,m) \enspace .$$
Then
$$\pr{}{X_j^1Y_j^1R_j^1M^1=x_jy_jr_jm} = \frac{1}{q}\mu(x_j,y_j)v_{x_j}(r_j,m)v_{y_j}(r_j,m),$$
for $(r_j,m) \in S$ and $\pr{}{X_j^1Y_j^1R_j^1M^1=x_jy_jr_jm} = 0$ otherwise.

Now define
$$u_{x_j}(r_j,m)\defeq v_{x_j}(r_j,m),~\text{and}~ u_{y_j}(r_j,m)\defeq v_{y_j}(r_j,m) ,$$
for $(r_j,m) \in S$ and define them to be $0$ otherwise. The claim follows.
\end{proof}

\begin{claim}\label{claim:goodcoordinate}

If $\pr{}{T^{(r)}=1}>2^{-\delta_1 t}$, then there exists a coordinate $j\notin \C$ such that
\begin{eqnarray}
&\relent{X^1_jY^1_j}{X_jY_j}\leq8\delta_1 = \frac{\ve^2}{4}, \label{eqn:distclose} \\
~\text{and}\nonumber\\
&\condmutinf{X^1_j}{M^1R^1_j}{Y^1_j}+\condmutinf{Y^1_j}{M^1R^1_j}{X^1_j}\leq16\delta_1(c+1).\nonumber\\
~\hspace{2mm}\label{eqn:condmutinf}
\end{eqnarray}
\end{claim}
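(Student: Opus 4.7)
The plan is an averaging argument over indices $j\in[t]\setminus\C$, which has cardinality $\geq (1-\delta_1)t$. Let $\nu$ denote the distribution conditioned on $T^{(r)}=1$ and $p\defeq\Pr[T^{(r)}=1]$; by assumption $p>2^{-\delta_1 t}$, and since $M$ is deterministic given $XY$, Fact~\ref{fact:relative entropy splitting} gives $\relent{X^1 Y^1}{XY}=\relent{X^1 Y^1 M^1}{XYM}\leq\log(1/p)<\delta_1 t$. It suffices to bound the sum over $j\in[t]\setminus\C$ of the quantity in~(\ref{eqn:distclose}) by $O(\delta_1 t)$ and the sum of the quantity in~(\ref{eqn:condmutinf}) by $O(\delta_1(c+1)t)$; Markov's inequality together with the union bound then yields a coordinate $j$ satisfying both with the stated constants.

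For~(\ref{eqn:distclose}), the product structure of $XY\sim\mu^t$ together with Fact~\ref{fact:relative entropy splitting} (chain rule) and Fact~\ref{fact:relative entropy joint convexity} (convexity in the second argument) yields $\sum_{j=1}^t \relent{X^1_j Y^1_j}{X_j Y_j}\leq\relent{X^1 Y^1}{XY}<\delta_1 t$. The average over the $\geq(1-\delta_1)t$ indices in $[t]\setminus\C$ is at most $2\delta_1$, so Markov's inequality bounds by $\tfrac14$ the fraction of indices violating $8\delta_1=\ve^2/4$.

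For~(\ref{eqn:condmutinf}), the key tool is a factor-of-$2$ trick on the auxiliary $D,U$, which remain uniform and independent of $X^1 Y^1 M^1$ under $\nu$ (since $D$ is auxiliary and $T^{(r)}$ is a function of $XYM$). Conditioning on $D_j=0$ makes $X^1_j=U_j$ known so the $X$-side MI vanishes, while $D_j=1$ makes $Y^1_j=U_j$ known so the $Y$-side MI vanishes. Averaging over $D_j$ and using the chain rule for mutual information then yields
\[
\sum_{j\in[t]\setminus\C}\Bigl[I_\nu\bigl(X^1_j;M^1\bigm|Y^1_j R^1_j\bigr) + I_\nu\bigl(Y^1_j;M^1\bigm|X^1_j R^1_j\bigr)\Bigr] = 2\,I_\nu\bigl(X^1_{[t]\setminus\C}\, Y^1_{[t]\setminus\C};M^1\bigm|X^1_\C,Y^1_\C,D,U\bigr) \leq 2H_\nu(M^1)\leq 2|M|\leq 2\delta_1 c t.
\]
The full quantity in~(\ref{eqn:condmutinf}) also includes the ``correlation slack'' $\sum_{j\in[t]\setminus\C}[I_\nu(X^1_j;R^1_j\mid Y^1_j)+I_\nu(Y^1_j;R^1_j\mid X^1_j)]$, which vanishes coordinate-by-coordinate under $\mu$ since $(X_j,Y_j)\perp R_j$ there. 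Under $\nu$ this sum is controlled by the relative-entropy budget $\relent{X^1 Y^1}{XY}\leq\log(1/p)<\delta_1 t$, via a careful chain-rule telescoping that exploits the facts that $R^1_j$ only involves coordinates $\neq j$ and that the product of marginals is the closest product distribution to any joint. This gives an overall bound $O(\delta_1(c+1)t)$ on~(\ref{eqn:condmutinf}), and Markov bounds the violating fraction; combining with the analysis of~(\ref{eqn:distclose}) via the union bound then produces the required $j$.

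The main obstacle is the control of the correlation slack: naive per-coordinate bounds give $O(\log(1/p))$ per coordinate, summing to the useless $O(t\log(1/p))$. Only a careful telescoping that treats the total correlation across coordinates under $\nu$ as a single instance of $\relent{X^1 Y^1}{XY}$, rather than $t$ independent copies, produces the sub-linear-in-$t$ bound needed.
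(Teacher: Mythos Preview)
Your proposal is correct and follows essentially the same route as the paper: the relative-entropy budget for~(\ref{eqn:distclose}), the $D,U$ symmetrization plus chain rule for the message part of~(\ref{eqn:condmutinf}), and Markov plus union bound to extract the good coordinate all match the paper's argument.

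The one place you leave under-specified is the ``correlation slack'' bound $\sum_{j\notin\C}\bigl[\condmutinf{X^1_j}{R^1_j}{Y^1_j}+\condmutinf{Y^1_j}{R^1_j}{X^1_j}\bigr]\le 2\delta_1 t$. You correctly flag this as the main obstacle and name the right ingredients (chain-rule telescoping and the ``closest product'' inequality), but you do not say what the telescoping actually is. The paper's resolution reuses the \emph{same} $D$-averaging trick you already applied to the message term: start from $\relent{X^1Y^1D^1U^1}{XYDU}=\relent{X^1Y^1}{XY}<\delta_1 t$ (equality because $T^{(r)}$ does not depend on $D$), chain-rule past $D,U,X_\C,Y_\C$ and then over $i\notin\C$ to reach per-coordinate terms $\expec{d_i,u_i,r_i}{\relent{(X^1_iY^1_i)_{d_i,u_i,r_i}}{(X_iY_i)_{d_i,u_i,r_i}}}$, observe that under the product prior the second argument collapses to $(Y_i)_{x_i}$ when $d_i=0$ and to $(X_i)_{y_i}$ when $d_i=1$, average over $D_i$, and apply Fact~\ref{fact:12} to convert each expected conditional relative entropy to the corresponding conditional mutual information. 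So the ``careful telescoping'' you gesture at is not a separate argument but literally the same $D_i$-split you used two lines earlier, now applied to a relative entropy rather than a mutual information. One minor slip: $U$ is \emph{not} independent of $X^1Y^1$ under $\nu$ (it is a function of $D,X,Y$); only $D$ remains uniform and independent of $X^1Y^1M^1$, which is all that is needed.
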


\begin{proof}
This follows using Claim III.6 in \cite{Jain2012}. We include a proof in Appendix for completeness.
\end{proof}

\subsection*{Conclusion and open problems}\label{sec:conclusion}
We provide a strong direct product result for the two-way public-coin
communication complexity in terms of an important and widely used lower bound method, the smooth rectangle bound. Some natural questions that arise are:
\begin{enumerate}
\item Is the  smooth rectangle bound a tight lower bound for  the  two-way public-coin
communication complexity for all relations? If yes, this would imply a strong direct product
result for the  two-way public-coin communication complexity for all relations, settling a major open question in this area.
To start with we can ask: Is the smooth rectangle bound a polynomially tight lower bound for the two-way public-coin communication complexity for all relations?
\item Or on the other hand, can we exhibit a relation for which the  smooth rectangle bound is (asymptotically) strictly smaller than its  two-way public-coin communication complexity?
\item Can we show similar direct product results in terms of possibly stronger lower bound methods like the partition bound and the information complexity?
\item It will be interesting to obtain new optimal lower bounds for interesting functions and relations using  the smooth rectangle bound, implying strong direct product results for them.
\end{enumerate}


\noindent {\it Acknowledgement}. We thank Prahladh Harsha for helpful discussions.

%
\bibliographystyle{abbrv}

%
%
\appendix

\label{sec:deferredproofs}

\begin{proofof}{Lemma~\ref{lem:ratiovs1}}
Let $G=\set{a:\abs{1-\frac{\pr{}{A'=a}}{\pr{}{A=a}}}\leq\frac{\ve}{r}}$, then
\begin{eqnarray*}
&&2\ve\geq\sum_a\abs{\pr{}{A=a}-\pr{}{A'=a}}\\
&&\geq\sum_{a\not\in G}\abs{\pr{}{A=a}-\pr{}{A'=a}}\\
&&=\sum_{a\not\in G}\pr{}{A=a}\abs{1-\frac{\pr{}{A'=a}}{\pr{}{A=a}}}\geq\pr{a\leftarrow A}{a\not\in G}\cdot\frac{\ve}{r}.
\end{eqnarray*}
Thus $\pr{a\leftarrow A}{a\in G}\geq 1-2r$. The second inequality follows immediately.
\end{proofof}

\begin{proofof}{Lemma~\ref{lem:Dgeqsrec}} Let $c \defeq \lsrec{z,\lambda}{(1+\ve') \frac{\delta}{\beta},\delta} {f}$. Let $g$ be such that $\lrec{z,\lambda}{(1+\ve') \frac{\delta}{\beta}}{g} = c$ and  
$$\pr{(x,y)\leftarrow\lambda}{f(x,y)\neq g(x,y)}\leq\delta.$$
 If $\mathrm{D}^{\lambda}_{\ve}(f) \geq c - \log(4/\ve)$ then we are done using Fact~\ref{fact:yaos principle}.

So lets assume for contradiction that $\mathrm{D}^{\lambda}_{\ve}(f) < c - \log(4/\ve)$. This implies that there exists a  deterministic
protocol $\Pi$ for $f$ with communication $c-\log(4/\ve)$ and distributional error under $\lambda$ bounded by $\ve$.
Since $$\pr{(x,y)\leftarrow\lambda}{f(x,y)\neq g(x,y)}\leq\delta,$$
the protocol $\Pi$ will have distributional error at most $\ve+\delta$ for $g$. Let $M$
represent the message transcript of $\Pi$ and let $O$ represent protocol's output. We assume that the last $\lceil \log |Z| \rceil$ bits of $M$ contain $O$. We have,
\begin{enumerate}
\item $\pr{m \leftarrow M}{\pr{}{M=m}\leq2^{-c}} \leq \ve/4 ,$ since the total number of message transcripts in $\Pi$ is at most $2^{c - \log(4/\ve)}$.
\item $\pr{m \leftarrow M}{O=z|~M=m} > \beta - \ve ,$ \newline since   $ \pr{(x,y)\leftarrow\lambda}{f(x,y) = \set{z}}=\beta$ and distributional error of $\Pi$  under $\lambda$ is bounded by $\ve$ for $f$.
\item  $\pr{m \leftarrow M}{\pr{(x,y) \leftarrow (XY)_m}{(x,y,O) \notin g|~M=m} \geq  \frac{\ve+\delta}{\beta-2\ve}} \leq \beta - 2\ve$, since distributional error of $\Pi$  under $\lambda$ is bounded by $\ve+\delta$ for $g$.
\end{enumerate}
Using all of above we obtain a message transcript $m$ such that $\pr{}{M=m} > 2^{-c}$ and  $(O=z|~M=m)$ and
\begin{align*}
\pr{(x,y) \leftarrow (XY|M=m)}{(x,y,O) \notin g|~M=m} &\leq  \frac{\ve+\delta}{\beta-2\ve} \\
& < (1+\ve') \frac{\delta}{\beta}.
\end{align*}
This and the fact that  the support of $(XY|~M=m)$ is a rectangle, implies that $\lrec{z,\lambda}{(1+\ve') \frac{\delta}{\beta}}{g} < c$, contradicting the definition of $c$. Hence it must be that  $\mathrm{D}^{\lambda}_{\ve}(f) \geq c - \log(4/\ve)$, which using Fact~\ref{fact:yaos principle} shows the desired.
\end{proofof}

\begin{proofof}{Claim~\ref{claim:goodcoordinate}} The following calculations are helpful for achieving (\ref{eqn:distclose}).
\begin{align}   \label{eqn:one coordinate close}
    \delta_1 k & > \rminent{X^1Y^1}{XY}        \geq
    \relent{X^1Y^1}{XY}\nonumber
    \\ &\geq  \sum_{i\notin C} \relent{X^1_iY^1_i}{X_i Y_i} ,
\end{align}
where the first inequality follows from the assumption that
$\prob{T^{(r)}=1} > 2^{-\delta k}$, and the last inequality follows from Fact~\ref{fact:relative entropy splitting}.
The following calculations are helpful for (\ref{eqn:condmutinf}).
\begin{align}
    &\delta_1 k > \rminent{X^1Y^1D^1U^1}{XYDU} \nonumber\\
    &\geq \relent{X^1Y^1D^1U^1}{XYDU} \nonumber \\
    &\geq \expec{(d,u,x_C,y_C)\\\leftarrow D^1,U^1,X^1_C,Y^1_C}
        {\relent{\br{X^1Y^1}_{d,u,x_C,y_C}}{\br{XY}_{d,u,x_C,y_C}}}
        \label{eqn:eq2} \\
    &= \sum_{i \notin C} \; \underset{\substack{(d,u,x_{C\cup[i-1]},y_{C\cup[i-1]})\\\leftarrow
        D^1,U^1,X_{C\cup[i-1]}^1,Y_{C\cup[i-1]}^1}}{\mathbb{E}} \nonumber \\
    &\ \Big[\relent{\br{X_i^1Y_i^1}_{\substack{d,u,x_{C\cup[i-1]}, \\ y_{C\cup[i-1]}}}}
        {\br{X_iY_i}_{\substack{d,u,x_{C\cup[i-1]}, \\ y_{C\cup[i-1]}}}} \Big]
        \label{eqn:eq3}\\
    &=\sum_{i\notin C}
        \expec{(d_i,u_i,r_i) \\\leftarrow D^1_i,U^1_i,R^1_i}
        { \relent{(X^1_iY^1_i)_{d_i,u_i,r_i}}{(X_iY_i)_{d_i,u_i,r_i}}}
        \label{eqn:eq4}\\
    &= \frac{1}{2} \sum_{i \notin C} \;
        \expec{(r_i,x_i)\leftarrow R^1_i,X^1_i}
        {\relent{\br{Y_i^1}_{r_i, x_i}}
        {\br{Y_i}_{x_i}}} +\nonumber\\
        &\quad  \frac{1}{2} \sum_{i \notin C} \;
    \expec{(r_i,y_i)\leftarrow R^1_i,Y^1_i}
    {\relent{\br{X_i^1}_{r_i, y_i}}
    {\br{X_i}_{y_i}}} .
        \label{eqn:junk is independent for Y}
\end{align}
Above, Eq.~\eqref{eqn:eq2} and Eq.~\eqref{eqn:eq3} follow from
Fact~\ref{fact:relative entropy splitting}; Eq.~\eqref{eqn:eq4} is
from the definition of $R_i$. Eq.~\eqref{eqn:junk is independent for
Y} follows since $D^1_i$ is independent of $R^1_i$ and with
probability half $ D^1_i$ is $0$, in which case $U^1_i = X^1_i$  and
with probability half $ D^1_i$ is $1$ in which case $U_i^1 = Y_i^1$.
By Fact \ref{fact:12},
\begin{equation}\label{eqn:correlatedsample}
\sum_{i\notin C}\br{\condmutinf{X^1_i}{R^1_i}{Y^1_i}+\condmutinf{Y^1_i}{R^1_i}{X^1_i}}\leq2\delta_1 k.
\end{equation}

We also need the following calculations, which exhibits that the
information carried by messages about sender's input is small.

\begin{align}
  &   \delta_1 c k \geq \abs{M^1}\geq \condmutinf{X^1Y^1}{M^1}{D^1U^1X^1_CY^1_C}\nonumber\\
  & =\sum_{i\notin C} \condmutinf{X^1_iY^1_i}{M^1}
    {D^1U^1X^1_{C\cup[i-1]}Y^1_{C\cup[i-1]}}
    \nonumber \\
    & = \sum_{i\notin
    C}\condmutinf{X^1_iY^1_i}{M^1}{D^1_iU^1_iR^1_i}\nonumber\\
    & = \frac{1}{2}\sum_{i\notin C}\br{\condmutinf{X^1_i}{M^1}{R^1_iY^1_i}+\condmutinf{Y^1_i}{M^1}{R^1_iX^1_i}}.
    \label{eqn:mutual inf XY and M}
\end{align}

Above first equality follows from chain rule for mutual information, second equality follows from definition of $R_i^1$ and the third equality follows since with
probability half $ D^1_i$ is $0$, in which case $U^1_i = X^1_i$  and
with probability half $ D^1_i$ is $1$ in which case $U_i^1 = Y_i^1$.

Combining Eqs.~\eqref{eqn:one coordinate close}, \eqref{eqn:junk
is independent for Y} and \eqref{eqn:mutual inf XY and
M}, and making standard use of Markov's
inequality, we can get a coordinate $j\notin C$ such that
\begin{eqnarray*}
&&\relent{X^1_jY^1_j}{X_jY_j}\leq8\delta_1,\\
&&\condmutinf{X^1_j}{R^1_j}{Y^1_j}+\condmutinf{Y^1_j}{R^1_j}{X^1_j}\leq16\delta_1,\\
&&\condmutinf{X^1_j}{M^1}{R^1_jY^1_j}+\condmutinf{Y^1_j}{M^1}{R^1_jX^1_j}\leq16\delta_1c.
\end{eqnarray*}
The first inequality is exactly the same as Eq.~\eqref{eqn:distclose}. Eq.~\eqref{eqn:condmutinf} follows by adding the last two inequalities.
\end{proofof}

\subsection*{Alternate definition of smooth rectangle bound}
An alternate definition of the smooth rectangle bound was introduced by Jain and
Klauck~\cite{JainKlauck2010}, using  the following linear program.

\begin{definition} \label{def:srec}
For total function $f:\X \times \Y \rightarrow \Z$, the $\epsilon$- smooth rectangle bound
of $f$ denoted  $\srec{}{\epsilon}{f}$ is defined to be 
$\max\{\srec{z}{\epsilon}{f}: z\in \Z\},$ where $\srec{z}{\epsilon}{f}$ is given by the optimal value of the following
linear program. 

{\footnotesize
\vspace{0.1in}
    \centerline{\underline{Primal}}
    \begin{align*}
      & \text{min:}\quad  \sum_{W \in \W}  v_{W} \\
       \quad &  \forall (x,y) \in  f^{-1}(z): \sum_{W: (x,y) \in W} v_{W} \geq 1 - \epsilon,\\
      & \forall (x,y) \in  f^{-1}(z): \sum_{W: (x,y) \in W} v_{W} \leq 1,\\
      &  \forall (x,y) \notin   f^{-1}(z): \sum_{W: (x,y) \in W} v_{W} \leq \epsilon,\\
      & \forall W : v_{W} \geq 0 \enspace .
    \end{align*}
}

{\footnotesize
    \centerline{\underline{Dual}}
    \begin{align*}
      & \text{max:}\quad   \sum_{(x,y)\in f^{-1}(z)} \left((1-\epsilon) \lambda_{x,y} - \phi_{x,y} \right)- \sum_{(x,y)\notin  f^{-1}(z)} \epsilon \cdot \lambda_{x,y}\\
       \quad &   \forall W : \sum_{(x,y)\in f^{-1}(z)\cap W} (\lambda_{x,y} - \phi_{x,y}) - \sum_{(x,y)\in (W  - f^{-1}(z))} \lambda_{x,y} \leq 1,\\
      & \forall (x,y) : \lambda_{x,y} \geq 0 ; \phi_{x,y} \geq 0 \enspace .
    \end{align*}
}
\end{definition}

\begin{lemma}\label{lem:eqv}
Let $f:\X \times \Y \rightarrow \Z$ be a total function. Let $z \in Z$ and $\ve >0$. There exists a distribution $\mu \in \X \times \Y$ and $\delta, \beta >0$ such that  
$$\lsrec{z,\mu}{(1+\ve^2)\frac{\delta}{\beta},\delta}{f}\geq \log(\srec{z}{\ve}{f}) + 3\log \ve.$$
\end{lemma}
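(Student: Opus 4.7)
The plan is to use LP duality. Let $c=\srec{z}{\ve}{f}$, and let $(\lambda,\phi)$ be an optimal dual solution attaining value $c$, which exists since the primal LP is finite. By complementary slackness applied to the primal constraints $\sum_{W\ni(x,y)} v_W \geq 1-\ve$ and $\sum_{W\ni(x,y)} v_W \leq 1$ for $(x,y)\in f^{-1}(z)$ (which cannot both be tight since $\ve>0$), one has $\lambda_{x,y}\phi_{x,y}=0$ for every $(x,y)\in f^{-1}(z)$. I partition $f^{-1}(z) = F_1 \cup F_2 \cup F_3$ disjointly by whether $\lambda_{x,y}>0$, $\phi_{x,y}>0$, or both are zero. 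Setting $A=\sum_{F_1}\lambda_{x,y}$, $B=\sum_{(x,y)\notin f^{-1}(z)}\lambda_{x,y}$, $P=\sum_{F_2}\phi_{x,y}$, $L=A+B$, and $N=L+P$, the dual-objective identity $c=(1-\ve)A-P-\ve B$ yields both $L\geq A\geq c/(1-\ve)$ and $P\leq (1-\ve)A$.

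Next, I would define the distribution $\mu(x,y)=(\lambda_{x,y}+\phi_{x,y}\mathbf{1}[(x,y)\in f^{-1}(z)])/N$, and let $g$ agree with $f$ outside $F_2$ and take some value $\neq z$ on $F_2$. Directly, $\delta:=\mu(\{f\neq g\})=\mu(F_2)=P/N$ and $\beta:=\mu(f^{-1}(z))=(A+P)/N$, so $\delta/\beta = P/(A+P) \leq (1-\ve)/(2-\ve)$. Setting $\ve_1:=(1+\ve^2)\delta/\beta$, a direct calculation using $1-2\ve+2\ve^2 \geq 1/2$ shows $1-2\ve_1 \geq \ve/4$ for all $\ve\in(0,1)$.

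For the rectangle analysis, I would consider any rectangle $R$ satisfying $\mu(g^{-1}(z)\cap R) \geq (1-\ve_1)\mu(R)$. Since $g^{-1}(z)$ meets the $\mu$-support exactly at $F_1$, this condition rearranges to $\lambda(R\setminus f^{-1}(z))+\phi(R\cap F_2) \leq \frac{\ve_1}{1-\ve_1}\lambda(R\cap F_1)$. Combining with the dual feasibility constraint for $R$, which (after using $\phi\equiv 0$ on $F_1$ and $\lambda\equiv 0$ on $F_2$) reads $\lambda(R\cap F_1) \leq 1 + \phi(R\cap F_2) + \lambda(R\setminus f^{-1}(z))$, yields $\lambda(R\cap F_1) \leq (1-\ve_1)/(1-2\ve_1)$, and hence $\mu(R) \leq 1/((1-2\ve_1)L) \leq (1-\ve)/((1-2\ve_1)c)$. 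Combining $1-2\ve_1\geq\ve/4$ with the elementary fact $\max_{\ve\in(0,1)}4\ve^2(1-\ve)=16/27<1$ gives $\mu(R)\leq 1/(\ve^3 c)$, whence $\rminent{\mu_R}{\mu}=-\log\mu(R) \geq \log c + 3\log\ve$. Since this holds for every admissible $R$, $\lrec{z,\mu}{\ve_1}{g} \geq \log c + 3\log\ve$, and since $g$ is $\delta$-close to $f$ under $\mu$, the smooth-rectangle bound $\lsrec{z,\mu}{(1+\ve^2)\delta/\beta,\delta}{f} \geq \log\srec{z}{\ve}{f}+3\log\ve$ follows.

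The main obstacle I anticipate is the delicate juggling of the three quantitative estimates $1-2\ve_1\geq\ve/4$, $L\geq c/(1-\ve)$, and $4\ve^2(1-\ve)\leq 16/27$, whose combined slack just barely suffices to absorb the $+3\log\ve$ correction uniformly in $\ve\in(0,1)$. The only nonroutine step is converting the condition $\mu(g^{-1}(z)\cap R)\geq(1-\ve_1)\mu(R)$ into a usable bound on $\lambda(R\setminus f^{-1}(z))+\phi(R\cap F_2)$; this requires carefully accounting for the fact that $\mu$ is a mixture of $\lambda$ (on $F_1$ and outside $f^{-1}(z)$) and $\phi$ (on $F_2$), with complementary slackness ensuring these supports are disjoint on $f^{-1}(z)$.
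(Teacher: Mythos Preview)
Your argument is correct and follows the same overall strategy as the paper: take an optimal dual solution $(\lambda,\phi)$, build a distribution $\mu$ from the dual variables, modify $f$ to $g$ by flipping the output on the set where $\phi>0$, and then use the dual feasibility constraint for rectangles to upper-bound $\mu(R)$ for every rectangle $R$ on which $g^{-1}(z)$ is $(1-\ve_1)$-dense. The differences are in the details. First, you invoke complementary slackness to get $\lambda_{x,y}\phi_{x,y}=0$ on $f^{-1}(z)$, whereas the paper achieves the same conclusion by an explicit algebraic replacement $(\lambda',\phi')\mapsto(\lambda,\phi)$ that does not lower the dual objective; your route is cleaner but requires observing that the primal attains its optimum. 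Second, and more substantively, the paper defines $\mu'$ with an extra $\ve$-scaling on $U_2$ and $U_0$ (i.e., $\mu'=\lambda$ on $U_1$, $\ve\phi$ on $U_2$, $\ve\lambda$ on $U_0$), while you take uniform weights $\mu\propto\lambda+\phi\mathbf{1}_{f^{-1}(z)}$. The effect is that the paper obtains $\delta/\beta\le\ve(1-\ve)$, whereas your construction only gives $\delta/\beta\le(1-\ve)/(2-\ve)$; you then need the extra computation $1-2\ve_1\ge\ve/4$ to push the argument through, but it does go through with the same $3\log\ve$ loss. Your contrapositive framing (bound $\mu(R)$ for admissible $R$) versus the paper's direct framing (show large $R$ are inadmissible) is cosmetic. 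One minor caveat shared with the paper: if the optimal dual has $P=0$ you get $\delta=0$, which formally violates the ``$\delta>0$'' in the statement; this is harmless since the bound then holds for all $\delta\ge0$ and can be fixed by an arbitrarily small perturbation.
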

\begin{proof}
Let $(\lambda'_{x,y}, \phi'_{x,y})$ be an optimal solution to the Dual.  For $(x,y) \in f^{-1}(z)$, if $\lambda'_{x,y} > \phi'_{x,y}$ define $\lambda= \lambda'_{x,y} - \phi'_{x,y}$ and $\phi_{x,y} = 0$. Otherwise define $\lambda= 0$ and $\phi_{x,y} = \phi'_{x,y} -  \lambda'_{x,y} $. For  $(x,y) \notin f^{-1}(z)$ define $\phi_{x,y}=0$. We note that $(\lambda_{x,y}, \phi_{x,y})$ is an optimal solution to the Dual with potentially higher objective value. Hence $(\lambda_{x,y}, \phi_{x,y})$ is also an optimal solution to the Dual.

Let us define three sets 
$$U_1 \defeq \{(x,y) |~ f(x,y) = z, \lambda_{x,y} > 0 \},$$
$$U_2 \defeq \{(x,y) |~ f(x,y) = z, \phi_{x,y} > 0 \},$$
$$U_0 \defeq \{(x,y) |~ f(x,y) \neq z, \lambda_{x,y} > 0 \}.$$
Define, 
$$\forall (x,y) \in U_1 : \mu'(x,y) \defeq \lambda_{x,y},$$
$$\forall (x,y) \in U_2 : \mu'(x,y) \defeq \ve \phi_{x,y},$$
$$\forall (x,y) \in U_0 : \mu'(x,y) \defeq \ve\lambda_{x,y}.$$
Define $r \defeq \sum_{x,y} \mu'(x,y)$ and define probability distribution $\mu \defeq \mu'/r$. Let $\srec{z}{\epsilon}{f}= 2^c$.  Define function $g$ such that $g(x,y)= z$ for  $(x,y) \in U_1$; $g(x,y) = f(x,y)$ for $(x,y) \in U_0$ and $g(x,y) = z'$ (for some $z' \neq z$) for $(x,y) \in U_2$. Then,
\begin{align*}
2^c & = \sum_{(x,y)\in f^{-1}(z)} \left((1-\epsilon) \lambda_{x,y} - \phi_{x,y} \right)- \sum_{(x,y)\notin  f^{-1}(z)} \epsilon \cdot \lambda_{x,y}\\
& = (1-\epsilon) \mu'(U_1) - \frac{1}{\ve} \mu'(U_2) -  \mu'(U_0)
\end{align*}
This implies $r \geq \mu'(U_1) \geq 2^c$.
Consider rectangle $W$.
\begin{align*}
& \sum_{(x,y)\in f^{-1}(z)\cap W} (\lambda_{x,y} - \phi_{x,y}) - \sum_{(x,y)\in (W  - f^{-1}(z))} \lambda_{x,y} \leq 1\\
& \Rightarrow \sum_{(x,y)\in U_1\cap W} \mu_{x,y} - \frac{1}{\ve}\sum_{(x,y)\in U_2\cap W}\mu_{x,y} -  \sum_{(x,y)\in U_0\cap W} \frac{1}{\ve} \mu_{x,y} \leq \frac{1}{r}\\
& \Rightarrow \ve\left(\sum_{(x,y)\in U_1\cap W} \mu_{x,y} - \frac{1}{r}\right) \leq \sum_{(x,y)\in U_2\cap W}\mu_{x,y} +  \sum_{(x,y)\in U_0\cap W}  \mu_{x,y} \\
& \Rightarrow \ve\left(\sum_{(x,y)\in g^{-1}(z)\cap W} \mu_{x,y} - \frac{1}{r}\right) \leq \sum_{(x,y)\in W - g^{-1}(z)} \mu_{x,y} \\
& \Rightarrow \ve\left(\sum_{(x,y)\in W} \mu_{x,y} - \frac{1}{r}\right) \leq (1+\ve) \cdot \sum_{(x,y)\in W - g^{-1}(z)} \mu_{x,y} \\
& \Rightarrow \ve\left(\sum_{(x,y)\in W} \mu_{x,y} - 2^{-c} \right) \leq (1+\ve) \cdot \sum_{(x,y)\in W - g^{-1}(z)} \mu_{x,y} .
\end{align*}
Now consider a $W$ with $\mu(W)\geq 2^{-c}/\ve^3$.  We have $\mu(W - g^{-1}(z)) \geq \frac{(1-\ve^3)\ve}{1+\ve} \mu(W) $. 
Define $\beta \defeq \mu(U_1 \cup U_2)$, $\delta \defeq \mu(U_2)$. Now,
\begin{align*}
(1-\ve) r \beta \geq  (1-\ve) \mu'(U_1) \geq \frac{1}{\ve} \mu'(U_2) = \frac{1}{\ve} r \delta .
\end{align*}
Hence we have 
$$\mu(W - g^{-1}(z)) \geq \frac{(1-\ve^3)\delta}{(1 -\ve^2)\beta} \mu(W) \geq  (1+\ve^2)\frac{\delta}{\beta} \mu(W).$$
This implies 
$\lrec{z,\mu}{(1+\ve^2)\delta/\beta}{g} \geq c + 3 \log \ve. $ This implies that 
$$\lsrec{z,\mu}{(1+\ve^2)\frac{\delta}{\beta},\delta}{f}\geq c + 3\log \ve = \log (\srec{z}{\ve}{f})+ 3\log \ve .$$
\end{proof}

\end{document}